\newcommand{\cF}{\mathcal{F}}
\newcommand{\cN}{\mathcal{N}}
\newcommand{\bop}{\mathcal{B}}
\newcommand{\Complex}{\mathbb{C}}
\newcommand{\id}{\mathsf{id}}
\newcommand{\be}{\begin{eqnarray} \begin{aligned}}
\newcommand{\ee}{\end{aligned} \end{eqnarray} }
\newcommand{\benn}{\begin{eqnarray*} \begin{aligned}}
\newcommand{\eenn}{\end{aligned} \end{eqnarray*} }
\newcommand{\Syn}{\mathrm{Syn}}
\newcommand{\Ext}{\mbox{Ext}}
\newcommand{\bc}{\begin{center}}
\newcommand{\ec}{\end{center}}
\newcommand{\half}{\frac{1}{2}}
\newcommand{\tr}{\mathop{\mathrm{tr}}\nolimits}
\newtheorem{theorem}{Theorem}[section]
\newtheorem{lemma}[theorem]{Lemma}
\newtheorem{corollary}[theorem]{Corollary}
\newcommand*{\cA}{\mathcal{A}} % cal
\newcommand*{\cB}{\mathcal{B}}
\newcommand*{\cH}{\mathcal{H}}
\newcommand*{\cI}{\mathcal{I}}
\newcommand*{\cK}{\mathcal{K}}
\newcommand*{\cQ}{\mathcal{Q}}
\newcommand*{\cR}{\mathcal{R}}
\newcommand{\SH}{{\rm H}}
\newcommand{\hmin}{{\rm H_{\rm min}}}
\newcommand{\hmineps}{{\rm H}^\eps_{\rm min}}
\def\Complex{\mathbb{C}}
\def\id{\mathbb{I}}
\def\01{\{0,1\}}
\newcommand*{\sbin}{\{0,1\}}
\newcommand{\eps}{\epsilon}
\newcommand{\ket}[1]{|#1\rangle}
\newcommand{\suppfig}{Supplementary Figures}
\newcommand{\supptab}{Supplementary Tables}
\newcommand{\suppmtd}{Supplementary Methods}
\newcommand{\supdis}{Supplementary Discussion}
\newcounter{qcounter}
\newcounter{protoCount}
\newcounter{protoList}
\newsavebox{\tmpbox}
\newlength{\protobox}
\newenvironment{protocol}[3]{
\bigskip
\addtocounter{protoCount}{1}
\noindent \begin{lrbox}{\tmpbox}
\setlength{\protobox}{\textwidth}
\addtolength{\protobox}{-9.7 cm}
\begin{minipage}[l]{\protobox}
\begin{bfseries}Protocol #1: #2\end{bfseries}
\ifthenelse{\equal{#3}{\empty}}{}{\\ #3}
\begin{list}{\begin{bfseries}\arabic{protoList}:\end{bfseries}}{\topsep=0in\itemsep=0.08in\parsep=0in\partopsep=0in\usecounter{enumi}}
{\usecounter{protoList}}}
{
\end{list}
\end{minipage}\end{lrbox}
\fbox{\usebox{\tmpbox}}
\bigskip
}
\begin{document}

\title{Experimental implementation of bit commitment in the noisy-storage model}
\author{Nelly Huei Ying Ng}
\affiliation{School of Physical and Mathematical Sciences, Nanyang Technological University, 21 Nanyang Link, 637371 Singapore}
\affiliation{Centre for Quantum Technologies, National University of Singapore, 3 Science Drive 2, 117543 Singapore}
\author {Siddarth K. Joshi}
\affiliation{Centre for Quantum Technologies, National University of Singapore, 3 Science Drive 2, 117543 Singapore}
\author {Chia Chen Ming}
\affiliation{Centre for Quantum Technologies, National University of Singapore, 3 Science Drive 2, 117543 Singapore}
\author{Christian Kurtsiefer}
\affiliation{Centre for Quantum Technologies, 
National University of Singapore,
  3 Science Drive 2, 117543 Singapore}
\affiliation{Physics Department, National University of
  Singapore, 2 Science Drive 3, 117542 Singapore}
\author{Stephanie Wehner}
\email{wehner@nus.edu.sg}
\affiliation{Centre for Quantum Technologies, National University of Singapore, 3 Science Drive 2, 117543 Singapore}
  \affiliation{School of Computing, National University of Singapore, 13 Computing Drive, 117417 Singapore}

\begin{abstract}
Fundamental primitives such as bit commitment and oblivious transfer serve as building blocks for many other two-party protocols. Hence, the secure implementation of such primitives are important in modern cryptography. In this work, we present a bit commitment protocol which is secure as long as the attacker's quantum memory device is imperfect. The latter assumption is known as the noisy-storage model. We experimentally executed this protocol by performing measurements on polarization-entangled photon pairs. Our work includes a full security analysis, accounting for all experimental error rates and finite size effects. This demonstrates the feasibility of two-party protocols in this model using real-world quantum devices. Finally, we provide a general analysis of our bit commitment protocol for a range of experimental parameters.
\end{abstract}
\maketitle

\section{Introduction}
Traditionally, the main objective of cryptography has been to protect communication from the prying eyes of an eavesdropper.
Yet, with the advent of modern communications new cryptographic challenges arose: we would like to enable
two parties, Alice and Bob, to solve joint problems even if they do not trust each other. Examples of such tasks
include secure auctions or the problem of secure identification such as that of a customer to an ATM.
Whereas protocols for general two-party cryptographic problems may be very involved, it is known that they can in principle
be built from basic cryptographic building blocks known as oblivious transfer~\cite{kilian} and bit commitment. 

The task of bit commitment is thereby particularly simple and has received considerable attention 
in quantum information. Intuitively, a bit commitment protocol consists of two phases. In the \emph{commit phase}, Alice
provides Bob with some form of evidence that she has chosen a particular bit $C \in \01$. Later on in the \emph{open phase},
Alice reveals $C$ to Bob. A bit commitment protocol is secure, if Bob cannot gain any information about $C$ before the open phase, 
and yet, Alice cannot convince Bob to accept an opening of any bit $\hat{C} \neq C$. 

Unfortunately, it has been shown that even using quantum communication none of these tasks can be implemented 
securely~\cite{mayers:bitcom, lo:promise, lo:insecurity, lo&chau:bitcom,kretch:bc}. 
Note that in quantum key distribution (QKD), Alice and Bob \emph{trust} 
each other and want to defend themselves against an outsider Eve. This allows Alice and Bob to perform checks on what Eve
may have done, ruling out many forms of attacks. This is in sharp contrast to two-party cryptography where there is no Eve and Alice and Bob \emph{do not trust}
each other. Intuitively, it is this lack of trust that makes the problem considerably harder.
Nevertheless, because two-party protocols form a central part of modern cryptography, one is willing to make \emph{assumptions} on how powerful an attacker
can be in order to implement them securely.

Here, we consider \emph{physical} assumptions that enable us to solve such tasks. In particular, can the sole assumption of a limited storage device lead to security?~\cite{Maurer92b} This is indeed the case and it was shown that security can be obtained if the attacker's \emph{classical} storage is limited~\cite{Maurer92b,cachin:bounded}. 
Yet, apart from the fact that classical storage is cheap and plentiful, assuming a limited classical storage has one rather crucial caveat: If the honest players need
to store $N$ classical bits to execute the protocol in the first place, \emph{any} classical protocol can be broken if the attacker can store
more than roughly $N^2$ bits~\cite{maurer:imposs}. 

Motivated by this unsatisfactory gap, it was thus suggested to assume that the attacker's \emph{quantum} storage 
was bounded~\cite{Bennett84,serge:new,serge:bounded,chris:id1,chris:id2}, or more generally, noisy~\cite{Noisy1, noisy:robust,noisy:new}. The central assumption of the noisy-storage model is that during waiting times $\Delta t$ introduced in the protocol, the attacker can only keep quantum information
in his quantum storage device $\cF$. The exact amount of noise can depend on the waiting time. Otherwise, the attacker may be all-powerful. In particular, he can store an unlimited amount of classical information, and perform any 
computation instantaneously without errors. Note that the latter implies that the attacker could encode his quantum information into an arbitrarily complicated
error correcting code, to protect it from noise in his storage device $\cF$. 

The assumption that storing a large amount of quantum information is difficult is indeed realistic today, as constructing large scale quantum memories that can store 
arbitrary information successfully in the first attempt has proved rather challenging. We emphasize that this model is not in contrast with our ability to build quantum repeaters, where it is sufficient for the latter to store quantum states while making many attempts. A review on quantum memories can be found in \cite{qmemory}, and numerous recent work can also be found in \cite{Usmani2010,bonarota2011,PhysRevLett.108.210501}. While noting that perpetual advances in building quantum memories fundamentally affect the feasibility of all protocols in the noisy storage model, yet we will explain below that given any upper bound on the size and noisiness of a future quantum storage device, security is in fact possible - we merely need to send more qubits during the protocol. 

In this work, we have implemented a bit commitment protocol that is secure under the noisy storage assumption. 
We provide a general security analysis of our protocol for a range of possible experimental parameters. The parameters of our particular
experiment are shown to lie within the secure region. The storage assumption in our work is such that a cheating party cannot store more than approximately 900 qubits, which is a reasonable physical constraint given modern day technology of storing quantum information.
 
\section{Result}

\subsection*{The Noisy Storage Model}
To state our result, let us first explain what we mean by a quantum storage device, and how does an assumption regarding these devices translate to security conditions in the noisy storage model. A more detailed introduction to the model can be found in e.g.~\cite{noisy:new}.

Of particular interest to us are storage devices consisting of $S$ ''memory cells'', each of which may experience some noise $\cN$ itself. 
Mathematically, this means that the storage device is a quantum channel [mathematically, a completely positive trace preserving map (CPTPM)]
of the form $\cF = \cN^{\otimes S}$ where $\cN: \bop(\Complex^d) \rightarrow \bop(\Complex^d)$ is a noisy channel acting on each memory cell mapping input states to some noisy output states.
For example, a noise-free storage device consisting of $S$ qubits (i.e.,$d=2$) corresponding to the special case of bounded storage~\cite{serge:bounded} is given
by $\cF = \id_2^{\otimes S}$ where $\id_2$ is the identity channel with one qubit input and one qubit output. 
Another example is a memory consisting of $S$ qubits, each of which experiences depolarizing noise according to the channel
$\cN_r(\rho) = r \rho + (1-r) \frac{\id}{2}$. The larger $r$ is, the less noise is present. Yet another example is the erasure channel, which models losses in the storage device.

It is indeed intuitive that security should be related to ''how much'' information the attacker can squeeze through his storage device. 
That is, one expects a relation between security and the capacity of $\cF$ to carry quantum information. 
Indeed, it was shown that security can be linked to the classical capacity~\cite{noisy:new}, the entanglement cost~\cite{entCost}, and finally the quantum capacity~\cite{qcextract} of the adversary's storage device $\cF$.

When evaluating security, we start with a basic assumption on the maximum size and the minimum amount of noise in an adversary's storage device.
% {\bf{should this not be the maximum size and  minimum noisiness of an adversary's storage device. }} 
% yes of course it is, but we discuss this at length later - the term ``minimum noisyness'' is bizarre to an information theorist, and I think it's quite clear her 
% what is meant
Such an assumption can for example be derived by a cautious estimate based on quantum memories that are available today. Note that these assumptions are for memories that can store arbitrary states on first attempt. Such memories presently exist for a handful of qubits. Given such an estimate, we then determine the number of qubits we need to transmit during the protocol to effectively overflow the adversary's memory device and achieve security. 

\subsection*{Protocol and its security}

We consider the bit commitment protocol from~\cite{noisy:new} with several modifications to make it suitable for an experimental implementation with time-correlated photon pairs. Figure~\ref{fig:flowchart} provides a simplified version of this modified protocol without
explicit parameters - the explicit version can be found in the~\suppmtd.
In the \suppmtd, we also provide a general analysis that can be used for any experimental setup (details on our particular experiment are also provided in the same section).

To understand the security constraints, we first need to establish some basic terminology.
In our experiment, Alice holds the source, and 
both Alice and Bob have four detectors, each one corresponding to one of the four BB84 states~\cite{Bennett84}.
If Alice or Bob observes a click of exactly one of their detectors (\textit{symmetrized} with the procedure outlined in \suppmtd), we refer to it as a \emph{valid
  click}. Cases where more than one detector clicks at the same instant on the
same side are ignored. 
A \emph{round} is defined by a valid click of \emph{Alice's} detectors.
%  We also say that Alice \emph{observed a round} if she sees a valid click within such a time frame.
% If there is no valid click in that time frame, it is a \emph{lost} or
% \emph{missing} round. 
A \emph{valid round} is where both parties Alice and Bob registered a valid
click in a corresponding time window, i.e., where a photon pair has been
identified.

First, to deal with losses in the channel we introduce a new step in which Bob reports a loss if he did not observe a valid click.
Second, to deal with bit flip errors on the channel, we employ a different class of error-correcting codes, namely a random code. Usage of random codes is sufficient for this protocol since decoding is not required for honest parties. The main challenge is then to link the properties of random codes to the protocol security.

Before we can argue about the correctness and security of the proposed protocol, let us introduce four crucial figures of interest that need to be determined in any experimental setup. The first two are the probabilities $p^0_{\rm sent}$ and $p^1_{\rm sent}$, that none or just a single photon was sent to Bob respectively, conditioned on the event that Alice observed a round.
The third is the probability $p_{\rm B, no click}^{\rm h}$ that honest Bob registers a round as missing, i.e. Bob does not observe a valid click when Alice does.
%{\bf{Is $p_{\rm B, no click}^h$ conditioned on alice seeing a click or not? The number we quote is conditioned on Alice seeing a click and so were all previous definitions of this. If you have changed the definition please let me know. The values are drastically different if conditioned it is the value quoted in the expt section, if not conditioned on alice seeing a click it is 0.99993 Therefore the two cannot mean the same-thing. - Sid}}
% this does not matter but we may condition if this matches your method of experimentally estimating this parameter
Again, this probability is conditioned on the event that Alice observed a round. Note that by no-signalling, Alice's choice of better (or worse) detectors should not
influence the probability of Bob observing a round.
Finally, we will need the probability $p_{\rm err}$ of a bit flip error, i.e. the probability that Bob outputs the wrong bit even though he measured in the correct basis. 

Naturally, since Alice and Bob do not trust each other, they cannot rely on each other to perform said estimation process. Note, however, that the scenario
of interest in two-party cryptography is that the honest parties essentially
purchase off the shelf devices with standard properties, for which either of them could perform said estimate. 
It is only the dishonest parties who may be using alternate equipment. Another way to look at this is to say that there exists
some set of parameters (i.e., maximum losses, maxmium amount of noise on the channel, etc) such that an honest party has to conform 
to these requirements when executing the protocol.

Let us now sketch why the proposed protocol remains correct and secure even in the presence of experimental errors. A detailed analysis is provided in the \suppmtd.
In our analysis, we take the storage device $\cF$, as well as a fixed
overall security error $\eps$ as given. Let $M$ be the number of rounds \emph{Alice} registers during the execution of the protocol. 
%This means that $M$ refers to the set of post-selected pulses on Alice's side. 
Let $n$ be the number of valid rounds.
In the description of theoretical parameters found in the \suppmtd , it is shown that $M$ and $n$ are directly related to each other, given some fixed experimental parameters. In particular, $n$ is a function of $M$ and $p_{\rm B, no click}^{\rm h}$
\begin{equation}
n \approx (1 - p_{\rm B, no click}^{\rm h}) M\ .
\end{equation}
We can now ask, how large does $M$ (or equivalently $n$) need to be in order to achieve security. If $n$ is very small, for example if $n\approx 100$, it is relatively easy to break the protocol since a cheating party might be able to store enough qubits. Also many terms from our finite $n$ analysis reach convergence only for sufficiently large $n$. As these terms depend on experimental parameters, security can be achieved for a larger range of experimental parameters if $n$ is large. By fixing the assumption on quantum storage size, experiment parameters and security error values, our analysis allows us to determine a value of $n$ where security is achievable.

\begin{figure}
\includegraphics[width=\columnwidth]{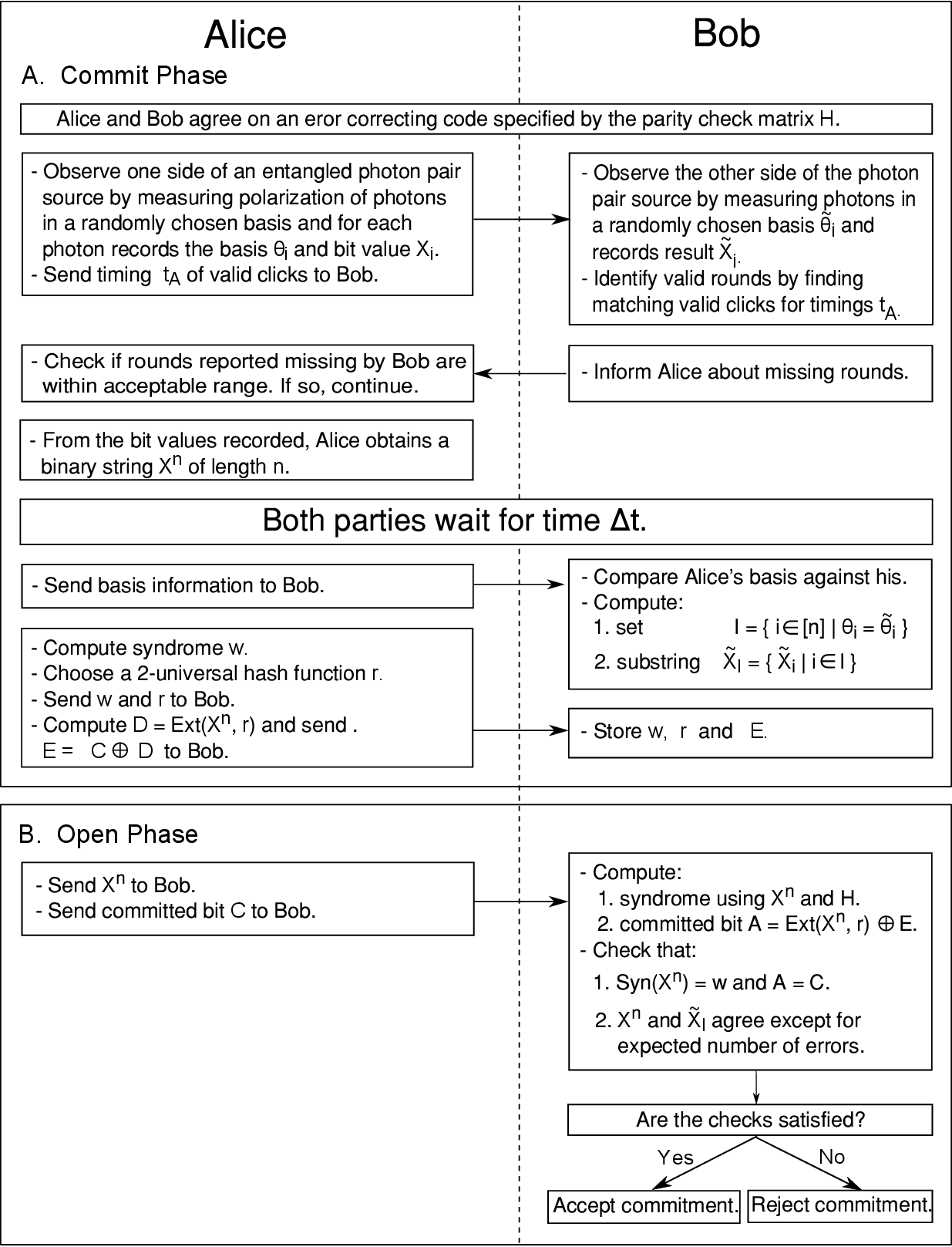}%{flowfinal}
\caption{\label{fig:flowchart}\small{Flowchart of the bit commitment protocol. This protocol allows Alice to commit a single bit $C\in\01$. Alice holds the source that creates the entangled photon pairs. The function $\Syn$ maps the binary string $X^n$ to its syndrome as specified by the error correcting code. The function $\Ext : \01^n\otimes\cR\rightarrow\01$ is a hash function indexed by $r$, performing privacy amplification. We refer to the \suppmtd ~for a more detailed statement of the protocol including details on the acceptable range of losses and errors.  
Note that the protocol itself does not require any quantum storage to execute.}
}
\end{figure}

{\bf Correctness:} First of all, we must show that if Alice and Bob are both honest, then Bob will accept Alice's honest opening of the bit $C$. Note that the only way that honest Bob will reject Alice's opening is when too many errors occur on the channel, and hence
part 2 of Bob's final check (see Figure~\ref{fig:flowchart}) will fail.  
A standard Chernoff style bound using the Hoeffding inequality~\cite{hoeffding} shows the probability of this event is small, i.e., 
that the deviation from the expected number of $p_{\rm err} n$ errors is not too large. 

{\bf Security against Alice:} Second, we must show that if Bob is honest, then Alice cannot get him to accept an opening of a bit 
$\hat{C} \neq C$. In our protocol, Alice is allowed to be all powerful, and is not restricted by any storage assumptions. If she is 
dishonest, we furthermore assume that she can even have perfect devices and can eliminate all errors and losses on the channel.
The first part of our analysis, i.e., the analysis of the steps before the syndrome is sent is thereby identical to~\cite{Curty10} (see Figure \ref{fig:flowchart}). 
More precisely, it is shown that up to this step in the protocol, a string $X^n \in \01^n$ is generated such that
Bob knows the bits $X_{\mathcal{I}}$ for a randomly chosen subset $\mathcal{I} \subseteq \{1,\ldots,n\}$, where $X_{\mathcal{I}}$ corresponds to the entries of the string $X^n$ indexed by the positions in $\mathcal{I}$. If Alice is dishonest, we want to be sure at this stage
that she cannot learn $\mathcal{I}$, that is, she cannot learn which bits of $X^n$ are known to Bob. 
In the original protocol without experimental imperfections~\cite{noisy:new} this was trivially guaranteed because Bob never 
sent any information to Alice. In this practical protocol, however, Bob does send some information to Bob, namely which rounds are valid for him, i.e., when he saw a click. In~\cite{Curty10} it was simply assumed that the probability of Bob observing a loss is the same for all detectors, and hence in particular also independent of Bob's basis choice. This is generally never the case in practise. 
However, by symmetrizing the losses as outlined in the \suppmtd, one can ensure
that the losses become the same for all detectors. In essence, this procedure probabilistically 
adds additional losses to the better detectors such that in the end all detectors are as lossy as the worst one. 
As Bob's losses are then independent of his basis choice, i.e., the detector, this is means that Alice cannot gain any information about $\mathcal{I}$ when Bob reports some rounds as being lost. 

The second part of the protocol and its analysis uses the string $X^n$ and Bob's partial knowledge $X_{\mathcal{I}}$ to bind Alice to her commitment. First, we have that properties of
the error-correcting code ensure that if the syndrome of the string ($\Syn(X^n)$ in Figure~\ref{fig:flowchart})
matches and Alice passes the first test, then she must flip
many bits in the string to change her mind. In the original protocol of~\cite{noisy:new} sending Bob the syndrome of $X^n$ ensured
that she must change at least $\frac{d}{2}$ bits of $X^n$ where $d$ is the distance of the error-correcting code, 
such that Bob will accept the syndrome to be consistent.
However, since Alice does not know which bits $X_{\mathcal{I}}$ are known to Bob she will get caught
with high probability. This due to the fact that with probability $1-(1/2)^{d/2}$ Alice changed at least a bit known to Bob,
and in the perfect case Bob aborts whenever a single bit is wrong. 
As we have to deal with experimental imperfections we cannot have that Bob aborts whenever a single bit is wrong, as bit flip errors on the channel likely lead errors even when Alice is honest. 
As such the difference to the analysis of~\cite{noisy:new} is that Bob must accept some incorrect bits in part two of 
his final check (see Figure~\ref{fig:flowchart}).
Our argument is nevertheless quite similar, but does require a careful tradeoff involving all experimental parameters 
between the distance of the code and the syndrome length (see below). We hence use a different error-correcting code as compared to~\cite{noisy:new}. In particular, we use a random code which has the property 
that with overwhelming probability its distance is large (i.e. it is hard for Alice to cheat), while nevertheless having a reasonably small syndrome length (see~\supdis). The latter will be important in the security analysis below when Alice herself is honest.

%Note that if Alice is dishonest, $n$ is nevertheless well defined as the number of rounds that she declares as valid.

{\bf Security against Bob:} Finally, we must show that if Alice is honest, then Bob cannot learn any information about her bit $C$ before
the open phase. 
Again, dishonest Bob may have perfect devices and eliminate all errors and losses on the channel. His only restriction is that during the waiting time $\Delta t$ he can store quantum information only in the device $\cF$. 

We first show that Bob's information about the entire string $X^n$ is limited.
We know from~\cite{noisy:new} that Bob's min-entropy about the string $X^n$ before Alice sends the syndrome, given all his information including his quantum memory can be bounded by
\begin{align}\label{minEntropy}
\hmin(X^n|\mathrm{Bob}) \gtrsim - \log P_{\rm succ}^{\mathcal{F}}(Rn)\ ,
\end{align}
where $P_{\rm succ}^{\mathcal{F}}(Rn)$ is the maximum probability of transmitting $Rn$ randomly chosen bits 
through the channel $\cF$ where $R$ is called the rate. This rate is determined using a novel uncertainty relation that we prove for 
BB84 measurements, and all experimental parameters. 
The min-entropy itself can thereby be expressed as $\hmin(X^n|\mathrm{Bob}) = - \log P_{\rm guess}(X^n|\mathrm{Bob})$, 
where $P_{\rm guess}(X^n|\mathrm{Bob})$ is the probability that Bob guesses the string $X^n$, maximized over all measurements that he can perform
on his system~\cite{krs:entropy}.  

As Alice sends the syndrome to Bob, Bob gains some additional information which reduces his min-entropy. More precisely, it 
could shrink at most by the length of the syndrome, i.e.,
\begin{align}
\hmin(X^n|\mathrm{Bob},\Syn(X^n)) \geq \hmin(X^n|\mathrm{Bob}) - \log |\Syn(X^n)|\ .
\end{align}
Note that this is the reason why we asked for the error-correcting code to have a short syndrome length above. 

Finally, we show that knowing little about all of $X^n$ implies that Bob cannot learn anything about $C$ itself.
More precisely, when Alice chooses a random two universal hash function $\Ext(\cdot, R)$ and performs privacy amplification~\cite{renato:diss}, Bob knows essentially nothing about the output $\Ext(X^n,R)=D$ whenever his min-entropy about $X^n$ is sufficiently large. 
The bit $D$ then acts as a key to encrypt the bit $C$ using a one-time pad. Since Bob cannot know $D$, he also cannot know $C$. 
Our analysis is thereby very similar to~\cite{noisy:new}, requiring only a very careful balance between the distance of the error-correcting code above, and the syndrome length. 

We provide a detailed analysis in the \suppmtd, where a general statement for arbitrary storage devices is included. Especially for the case of bounded storage $\cF = \id_2^{\otimes S}$, we can easily evaluate how large $M$ needs to be in order to achieve security against both Alice and Bob, when an error parameter $\eps$ is fixed. The total execution error of the protocol is obtained by adding up all sources of errors throughout the protocol analysis.

The case where Alice and Bob are both dishonest is not of interest, because the aim of this protocol is to perform correctly while both players are honest, and protect the honest players from dishonest players.

\subsection*{Experiment}

We have implemented a quantum protocol for bit commitment that is secure in the noisy-storage model. For this, $n = 250\,000$ valid rounds (see below) were used at a bit error rate of $p_{\rm err} = 4.1\%$ (after symmetrization) 
%{\bf{Nelly was able to reduce the error further by some processing We need to confirm this number and how she got it.}} 
% nelly: this happens after the symmetrization procedure was done on both Alice's and Bob's side. 
% Refer to appendix b: symmetrizing losses.
to commit one bit with a security error of less than $\eps = 2 \times 10^{-5}$. Note that $\epsilon$ is the final correctness and security error for the execution of bit commitment in our experiment. This protocol is secure under the assumption that Bob's storage size is no larger than 972 qubits, where each qubit undergoes a low depolarizing noise with a noise parameter $r=0.9$ (see \suppmtd~ Section D).
We stress that our analysis is done for finite $n$, and all finite size effects and errors are accounted for. The $\eps$ includes the error in the choice of random code in the protocol, finite size effects that need to be bounded, smoothing parameters from an uncertainty relation, etc.
Our experimental implementation demonstrates for the first time that two-party protocols proposed in the bounded and noisy-storage models 
are well within today's capabilities.

\section{Discussion}

We demonstrated, for the first time, 
that two-party protocols proposed in the bounded and noisy-storage models can be implemented today.
We emphasize that whereas - like so many experiments in quantum information - our experiment is extremely similar to QKD the experimental parameter requirements and analysis is entirely different to QKD. Where there are many experiments carrying out QKD, there are only a handful of implementation results for two party protocols \cite{nguyen08,berlin11}. Bit commitment is one of the most fundamental protocols in cryptography. For example, it is known that with bit commitment, coin tossing can be built. 
Also using additional quantum 
communication we can build oblivious transfer~\cite{yao:otFromBc}, which in turn enables us to solve any two-party cryptographic problem~\cite{kilian}.
In the \suppmtd, we provided a detailed analysis of our modified bit commitment
protocol including a range of parameters for which security can be shown. Our
analysis could be used to implement the same protocol using a different,
technologically simpler setup, with potentially lower error rates or
losses. Our analysis can also address the case of committing several bits at once. 

It would be interesting to see implementations of other protocols in the noisy-storage model. 

Finally, note that our analysis rests on a fundamental assumption made in in the analysis of 
\emph{all} cryptographic protocols, namely
that Alice does not have access to Bob's lab and vice versa. In particular,
this means that Alice cannot tamper with the random choices made by Bob,
potentially forcing him to measure e.g. only in one basis, or by maniplating
apparent detector losses~\cite{makarov:05,gerhardt:11a}.

\section{Methods}

\subsection*{Parameter ranges}
Our theoretical analysis shows security for a general range of parameters as illustrated in Figures~\ref{fig:securityregion2a},~\ref{fig:securityregion1a} and \ref{fig:securityregion3a}. 
A fully general theoretical statement can be found in the \suppmtd. These plots demonstrate that security is possible for a wide range of parameters, of which our particular implementation forms a special case.
The plots are done for fixed values of $n=250000$ and a total execution error of $\epsilon=3\cdot 10^{-4}$, unless otherwise indicated. 
Finally, Bob's storage size is quantified by $S$, the number of qubits that Bob is able to store. The plots assume a memory of $S$ qubits, where each qubit undergoes depolarizing noise with parameter $r=0.9$.

\begin{figure}[h!]
\includegraphics[width=0.9\columnwidth]{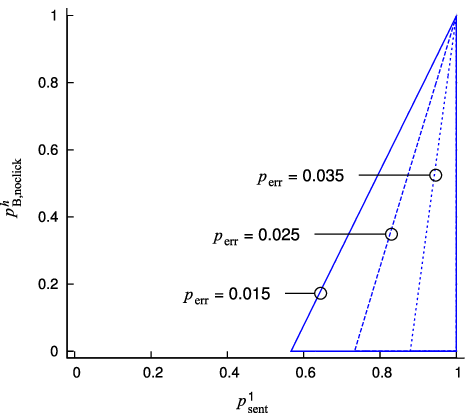}
\caption{\label{fig:securityregion2a}\textbf{Security region for
  $p^1_{\rm sent}$ versus $p^{\rm h}_{\rm B, no click}$.} Plots were done for distinct values of $p_{\rm err}$, while storage
  size is fixed $S=2500$, and $p_{\rm B, no click}^{\rm d} = 0$. For small values of $p_{\rm B, no click}^{\rm h}$ (large amounts of losses), there exists a threshold on $p_{\rm sent}^1$ for the protocol to be secure. This threshold increases with $p_{\rm err}$, 
and for extremely small storage rates, it gives a maximal tolerable $p_{\rm err}\approx 0.046$. }
\end{figure}
\begin{figure}[h!]
\includegraphics[width=0.9\columnwidth]{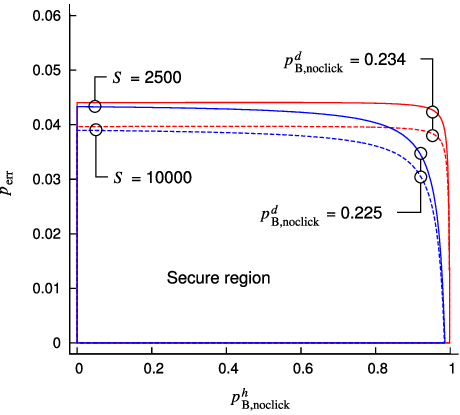}
\caption{\label{fig:securityregion1a} \textbf{Security region for some typical parameter ranges.} $p_{\rm B, no click}^{\rm h}$ and $p_{\rm err}$ quantify the amount of erasures and errors in the protocol. For higher summation values of $p_{\rm B, no click}^{\rm d}+p_{\rm sent}^1$, the less multi-photons Bob gets, and erasures have less impact on the protocol security. This implies if the source is ideal, the protocol remains secure for large values of erasures. Dependences in the security region between erasures and errors also become more obvious when $p_{\rm B, no click}^{\rm d}+p_{\rm sent}^1$ is low. Furthermore, large assumptions on $S$ directly decrease the amount of min-entropy, causing tolerable $p_{\rm err}$ to drop consistently for all amounts of erasures.}
\end{figure}

\begin{figure}[h!]
\includegraphics[width=\columnwidth]{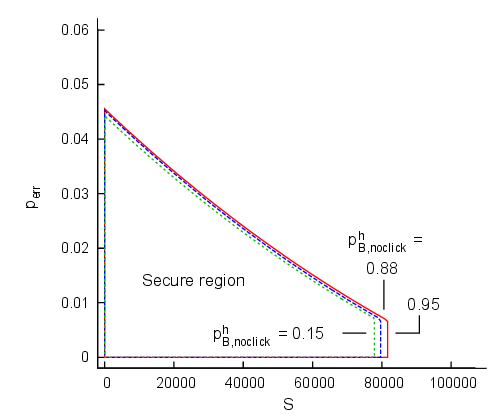}
\caption{\label{fig:securityregion3a}\textbf{Security region for different storage
  size $S$ and error rate $p_{\rm err}$, with  $p_{\rm sent}^1=0.765$, and
  $p_{\rm B, no click}^{\rm d}=0.234$ fixed.} This plot shows a monotonic decreasing trend for tolerable $p_{\rm err}$ w.r.t
storage size $S$. The sharp cut-off for $S$ varies with $p_{\rm B, no click}^{\rm h}$, since with lower detection efficiency, dishonest Bob can report more missing rounds, hence the lower his storage size has to be for security to hold. Also, the plot shows security for mostly low values of storage rate. The result is non-optimal, since it has been shown \cite{entCost} that security can be achieved with arbitrarily large storage sizes, if the depolarizing noise parameter $r \lesssim 0.7$. This is because we bound the smooth min-entropy of an adversarial Bob by the \emph{classical capacity} of a quantum memory, while \cite{entCost} does so in terms of \emph{entanglement cost}. Since the latter is generally smaller than the former, this poses a better advantage for security which is not shown in our analysis. 
}
\end{figure}
\pagebreak
\subsection*{Experimental Implementation}
We implement this protocol with a series of entangled photons, with
the polarization degree of freedom forming our qubits. This allows for
reliable measurements in two complementary bases. Basis 1
corresponds to horizontal/vertical (HV) polarization, and basis 2 to
$\pm45^\circ$ (+-) linear polarization. The polarization-entangled photon pairs are
prepared via spontaneous parametric down conversion (SPDC), collected into
single mode optical fibers, and guided to polarization analyzer (PA) located
with Alice and Bob (see figure~\ref{fig:setup}). Each PA
consists of a non-polarizing beam  splitter (BS) providing a random basis
choice, followed by  two polarizing beam splitters (PBS) and a pair of silicon
avalanche photodiodes (APD) as single photon detectors in each of the BS
outputs. A  half wave plate before one of the PBS rotates the polarization by
45$^\circ$ degrees. This detection setup was used in a number of QKD
demonstrations \cite{kurtsiefer:02b, marcikic:06, ling:08}.

\begin{figure}
\includegraphics[width=\columnwidth]{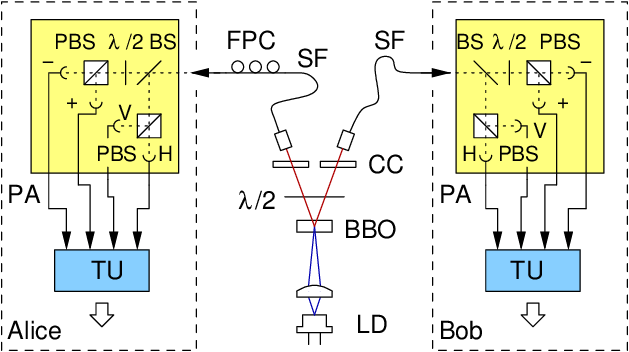}
\caption{\label{fig:setup}
Experimental setup. Polarization-entangled photon pairs are generated
via non-collinear type-II spontaneous parametric down conversion of blue light
from a laser diode (LD) in a barium-betaborate crystal (BBO), and distributed to
polarization analyzers (PA) at Alice and Bob via single mode optical fibers
(SF). The PA are based on a nonpolarizing beam splitter (BS) for a random
measurement base choice, a half wave plate ($\lambda/2$) at one of the of the
outputs, and polarizing beam splitters (PBS) in front of single-photon
counting silicon avalanche photodiodes. Detection events on both sides are
timestamped (TU) and recorded for further processing. A polarization
controller (FPC) ensures that polarization anti-correlations are observed in all measurement bases.} 
\end{figure}

The SPDC source is similar to \cite{ling:08}, with a continuous wave
free running laser diode (398\,nm, 10\,mW) pumping a 2\,mm thick
Barium-betaborate crystal cut for type-II non-collinear parametric down
conversion and the usual walk-off compensation to obtain polarization-entangled
photon pairs \cite{kwiat:95}. We collect  photon pairs into single mode
optical fibers such that we observe an average pair rate $r_\mathrm{p}=2997\pm
82$\,s$^{-1}$.

Such a source generates photon pairs in a stochastic manner, but with a
strong correlation in time. Therefore, valid clicks are timestamped on both
sides first. In a classical communication step, detection times $t_{\rm A},t_{\rm B}$ are
compared, and valid rounds are identified if valid clicks fall into a
coincidence time window of $\tau_\mathrm{c}=3$\,ns, i.e., $|t_\mathrm{A}-t_\mathrm{B}|\le\tau_\mathrm{c}/2$,
similar to \cite{marcikic:06} with the code in \cite{kurtsiefer:08}. The
visibility of the polarization correlations in the Singlet state are
$97.7\pm0.6$\% and $94.7\pm0.9$\% in the HV and $\rm45^\circ$ linear
basis. Individual detection rates on both sides are
$r_\mathrm{A}=23758\pm221$\,s$^{-1}$ and $r_\mathrm{B}=22227\pm247$\,s$^{-1}$ on Alice and Bob's
side, respectively. In an initial alignment step, the fiber 
polarization controller was adjusted such that we see polarization
correlations corresponding to a singlet state with a quantum bit
error ratio (QBER) of about $p_{\rm err}=4.1$\%. The QBER is not to be 
confused with the failure probability of bit commitment protocol. Calculations of 
the latter are explicitly stated in the \suppmtd.
As reported in the summarizing paragraph of our introduction, this quantity is much smaller than the former.

For carrying out a successful bit commitment, we need to determine the
parameters $p^1_{\rm sent}$, $p^0_{\rm sent}$, and $p_{\rm B, no click}^{\rm h}$.
Depending on these probabilities and the desired error parameter $\epsilon$, we
choose a particular error correcting code and number of rounds $M$ needed for a successful bit
commitment. 
% {\bf{Removed $M$ minor change to talk about parameter $n$ instead.}}
% from a security proof perspective, it's strange to choose $n$. We choose $M$ and then $n$ is defined
% in terms of $M$
To estimate these probabilities out of the
experimental parameters of our source/detector combination, we  model our
setup by a lossless SPDC source emitting only photon pairs at a rate
$r_{\rm s}$, and assign all imperfections (losses, limited detection
efficiency, and background events) to the detectors at Alice and Bob. 
Since the coherence time of
the photons in our case is much shorter than the coincidence detection time
window $\tau_{\rm c}$, the distribution of photon pairs in time can be well
described by a Poisson process, which allows an assessment of multiphoton
events. A detailed derivation of bounds for the probabilities is given in the
\suppmtd, we just summarize the results necessary for
evaluating the security of the protocol:
\begin{eqnarray}
p_{\rm sent}^0&\le&(r_{\rm A}-r_{\rm p})/r_{\rm A}=0.875\pm0.009\,, \\
p_{\rm sent}^{n>1}&<&{r_{\rm A} r_{\rm B}\over r_{\rm p}}\tau_{\rm c}=5.32\pm0.17\times10^{-4}\,,\\
p_{\rm sent}^1&=&1-p_{\rm sent}^0-p_{\rm sent}^{n>1}>0.125\pm0.009\,,\\
p_{\rm sent}^0+p_{\rm sent}^1&=&1-p_{\rm sent}^{n>1}>0.99947\pm0.000017\,,\\
p_{\rm B, no click}^{\rm h}&=&1-r_{\rm p}/r_{\rm A}=0.875\pm0.009\,.
\end{eqnarray}

Due to small differences in the detection efficiency of the APD and
imperfections in polarization components in the actual experiment, there is an
asymmetry in the probability of detecting each bit in
each basis. Furthermore, the beam splitter for the random measurement basis
choice are not completely balanced. A summary of these imperfections over a
number of bit commitment runs is shown in figure ~\ref{fig:asymmetry}.
This can be corrected for by
discarding rounds until the probabilities for both bits are equal. Discarded
bits can be  modeled as losses without affecting the security of the
protocol. A detailed analysis of this can be found in the \suppmtd.

\begin{figure}[h!]
\includegraphics[width=\columnwidth]{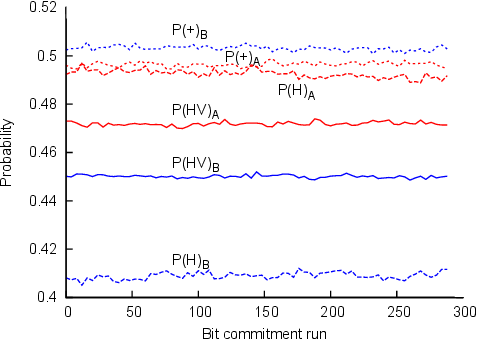}
\caption{\label{fig:asymmetry}Bias in measurements. Solid lines indicate the
  probabilities $P(HV)$ of a HV basis choice for both Alice and Bob for
  data sets of $250000$ events each. Dashed lines indicate the
  probability $P(H)$ of a H in the HV measurement basis, the dotted lines the
  probability $P(+)$ of a $+45^\circ$ detection in a $\pm45^\circ$ measurement
  basis. These asymmetries arise form optical component imperfections and are
  corrected in a symmetrization step.}
\end{figure}
\pagebreak
{\bf Author contributions}
NN, CK and SW designed the research. SJ, CM and CK carried out the experiment. NN wrote the software. NN and SW performed the theoretical analysis. 
NN, SJ, CK and SW wrote the paper. There are no competing financial interests.

\newpage

\setcounter{section}{0}
\renewcommand{\figurename}{\textbf{Supplementary Figure}}
\renewcommand{\tablename}{\textbf{Supplementary Table}}
\setcounter{figure}{0}
\renewcommand{\thefigure}{\textbf{S\arabic{figure}}}
\renewcommand{\thetable}{\textbf{S\arabic{table}}}
\renewcommand{\theequation}{S\arabic{equation}}
\renewcommand{\thetheorem}{\arabic{theorem}}
\setcounter{theorem}{0}
\begin{center}
\large{\textbf{Supplementary Material}}
\end{center}
\vskip-35cm
\section{Supplementary Figures}

\begin{figure}[h!]
\includegraphics[width=0.9\columnwidth]{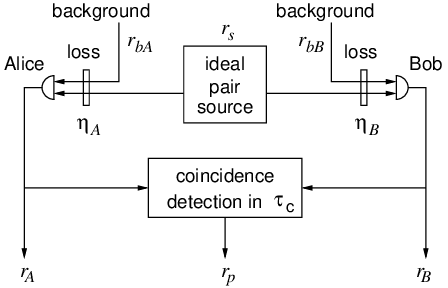}
\caption{\label{fig:model}\textbf{Model of experimental setup.} An ideal source generates time-correlated photon
  pairs with a rate $r_{\rm s}$ and sends them to detectors at Alice and Bob. The losses (due to all causes including source imperfections and detection efficiencies)
  are modeled with attenuators with a transmission $\eta_{\rm A}$ and $\eta_{\rm B}$,
  respectively. To cater for dark counts in detectors, fluorescence background
  and external disturbances, we introduce background rates $r_{\rm bA},r_{\rm bB}$ on
  both sides. Valid rounds are identified  by a coincidence detection
  mechanism that recognizes photons corresponding to a given entangled
  pair. Event rates $r_{\rm A}$ and $r_{\rm B}$ reflect measurable detection rates at
  Alice and Bob, while $r_{\rm p}$ indicates the rate of identified coincidences.}
\end{figure}
\vskip2cm
\begin{figure}[h!]
\includegraphics[width=0.9\columnwidth]{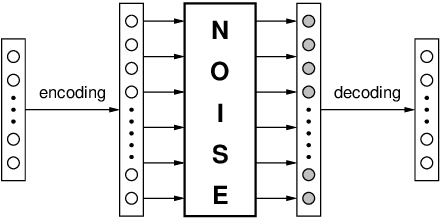}
\caption{\label{fig:channel}\textbf{The encoding and decoding of a message.} A total of \textit{k} bits were encoded into \textit{n} bits and sent through the noisy channel, then recovered completely after undergoing the transmission process.}
\end{figure}

\begin{figure}[h!]
\includegraphics[width=0.9\columnwidth]{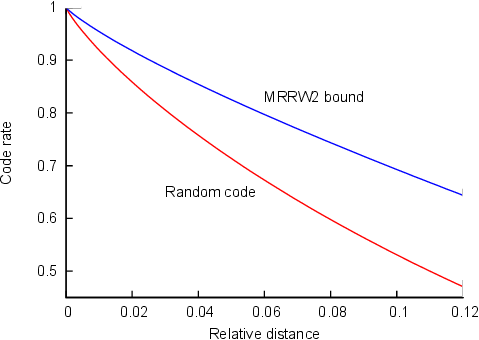}
\caption{\label{fig:reldist}\textbf{Relative distance of random code versus code rate.} A randomly generated code reaches the bound given in Theorem \ref{randomcode} with overwhelming probability. Meanwhile, the MRRW2 bound~\cite{errorcorrcodes} is the smallest upper bound derived up to the present, and it is not known if this bound is tight. It is also not known if there exists any linear binary codes at all between the two regions.}
\end{figure}
\vskip0cm
\pagebreak
\section{Supplementary Tables}
\vspace{-3cm}
\begin{table}[h!]\footnotesize
\begin{tabular}{cll}
\hline ~~~Probabilities~~~~ & Description \\ 
\hline $p^1_{\rm sent}$ & Probability that a single photon was sent to Bob.\\ 
\hline $p_{\rm B,no click}^{\rm h}$ & Probability that honest Bob observes no click.\\ 
\hline $p_{\rm B,no click}^{\rm d}$ & Probability that dishonest Bob observes no
click.\\
& Note: this value is equal to $p_{\rm sent}^0$, i.e., the  \\
& probability that no photons were sent to Bob. \\ 
\hline $p_{\rm err}$ & Probability that the measurement outcome for \\
& honest Alice and honest Bob is different, \\
& when the same basis is used for both parties.  \\ 
\hline 
\end{tabular} 
\caption{\label{probabilities}\textbf{Parameters required for security proof of bit commitment.} All the above quantities are conditioned on the event that Alice registered a valid click.}
\end{table}

\pagebreak

\section{Supplementary Discussion}
\section*{Properties of Error-correcting codes}\label{section:codes}
A linear error-correcting code can be defined by specifying its parity check
matrix H, which has dimensions $n \times (1-R)n$. Given a vector \textbf{x} of length
$n$, the parity check syndrome is simply Syn(\textbf{x})=\textbf{x$\cdot$H}. \cite{errorcorrcodes} \\

Recall that in our BC protocol, both parties agree on a code beforehand, and during the commit phase, Alice sends the syndrome to Bob. The syndrome mainly serves as a back-checking procedure for Bob during the open phase to confirm that Alice is honest. The longer the length of the syndrome, the more information about $X^n$ is given to Bob, and the harder it becomes for Alice to cheat. Both Alice and Bob agree on the code used, and for our purposes the complications in decoding is unnecessary, as an honest Bob never needs to decode.\\

In the theory of error-correcting codes, a question of much significance is depicted in~\suppfig~\ref{fig:channel} : given a scenario where information is sent through an unavoidable noisy channel, under what conditions does an encoding scheme exist such that the message can be recovered completely after undergoing the communication process? In other words, given a message Y comprising of $k$ bits and a noisy channel for communication, what is the theoretical minimum length of encoded message $n$, such that the decoding can detect errors and recover Y accurately?\\

It has been shown by Shannon that for the recovery of information to be possible, the fraction $\frac{k}{n}$ has a theoretical upper bound C, known as the capacity of the channel. For any value $R=\frac{k}{n} > C$, decoding is never possible. For a binary symmetric channel, the capacity is proven to be
\begin{equation}
C_{BSC}(p_{\rm err})=1-h(p_{\rm err}),
\end{equation}
where 
\begin{equation}
{\rm h}(p_{\rm err})= - p_{\rm err}\log_{2}p_{\rm err} - (1-p_{\rm
  err})\log_2(1-p_{\rm err})
\end{equation}
is the binary entropy of the BSC channel.\\

For values of code rate R strictly above the channel capacity, the success probability of delivering the message is exponentially decreasing with code length regardless of the encoding/decoding scheme used.\\

Besides the code rate, another important quantity of error-correcting codes is the minimum distance $d$. Given an error-correcting code, this quantity shows the minimum hamming distance between two strings that have the same parity check syndrome. The larger the minimum distance, the more effective a code is at correcting errors. In the subsequent section, we investigate the relation of parameters $R$ and $d$ for randomly generated codes, and show that random codes satisfy our requirements on these parameters for the protocol to be secure.

\subsection*{Random Codes}
Given a parity check matrix constructed randomly, we are interested in what is the minimum distance of this code. This problem is a computationally NP-hard one, but we do know some probabilistic facts about the minimum distance, which is stated in the theorem below:

\begin{theorem}\label{randomcode}(Random codes, \cite{gallager})
Given a randomly generated binary linear code with rate $R$, the probability that minimum distance $d$ is smaller than some $\delta n$ is bounded by the following:
\begin{equation}
\Pr[d \leq \delta n] \leq 2^{(R-C_\delta)n},\qquad \mbox{for }0\leq\delta\leq 1.
\end{equation}
\end{theorem}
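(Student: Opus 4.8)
The plan is to prove this by the first-moment method (a union bound over low-weight vectors), which is the classical Gallager argument. First I would reformulate the event. Since the code is linear, its minimum distance equals the minimum Hamming weight of a nonzero codeword, and a codeword is any $x \in \{0,1\}^n$ whose syndrome vanishes, i.e.\ $xH = 0$, where $H$ is the random $n \times (1-R)n$ parity-check matrix with i.i.d.\ uniform entries. Thus $\{d \le \delta n\}$ is exactly the event that some nonzero vector of weight at most $\delta n$ lies in the code, and a union bound gives
\begin{equation}
\Pr[d \le \delta n] \;\le\; \sum_{x \,:\, 1 \le \mathrm{wt}(x) \le \delta n} \Pr[xH = 0].
\end{equation}

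The key step is to evaluate $\Pr[xH=0]$ for a fixed nonzero $x$ over the randomness of $H$. I would fix the support $S = \{i : x_i = 1\}$ with $|S| = \mathrm{wt}(x) \ge 1$, and observe that the $j$-th syndrome coordinate is the parity $\sum_{i \in S} H_{ij} \bmod 2$. Because the entries $\{H_{ij}\}_{i \in S}$ are independent and uniform and $|S| \ge 1$, each such parity is itself a uniform bit; moreover distinct columns of $H$ are independent, so the $(1-R)n$ syndrome coordinates are mutually independent uniform bits. Hence $\Pr[xH = 0] = 2^{-(1-R)n}$, independently of which nonzero $x$ we chose. This is the only place the randomness of the construction enters, and it conveniently decouples from the weight of $x$.

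It then remains to count the vectors and assemble the exponent. The number of nonzero vectors of weight at most $\delta n$ is $\sum_{w=1}^{\floor{\delta n}} \binom{n}{w}$, which I would bound by $2^{h(\delta) n}$ using the standard binomial-tail/entropy estimate. Substituting both pieces,
\begin{equation}
\Pr[d \le \delta n] \;\le\; 2^{h(\delta) n}\, 2^{-(1-R)n} \;=\; 2^{\left(R - (1 - h(\delta))\right)n} \;=\; 2^{(R - C_\delta)n},
\end{equation}
since $C_\delta = 1 - h(\delta)$ is the binary-symmetric-channel capacity at crossover probability $\delta$, matching the stated bound.

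I expect the only genuine subtlety --- really a caveat rather than an obstacle --- to be the entropy estimate $\sum_{w \le \delta n}\binom{n}{w} \le 2^{h(\delta)n}$, which holds for $\delta \le 1/2$; this is precisely the regime of interest, since a useful relative distance lies below $1/2$ and the bound is informative exactly when $R < C_\delta$, i.e.\ below the Gilbert--Varshamov threshold, where the exponent is negative and $\Pr[d \le \delta n]$ decays exponentially in $n$. I would also remark that the argument is purely one-sided and never needs to control the exact dimension of the random code: the union bound upper-bounds the failure probability directly from the syndrome condition, so the (generically full) rank of $H$ plays no role.
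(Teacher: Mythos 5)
Your proof is correct, and it is essentially the classical argument: the paper itself does not prove this statement but cites it to Gallager's thesis, and your first-moment/union-bound derivation (uniformity of $xH$ for fixed nonzero $x$, giving $\Pr[xH=0]=2^{-(1-R)n}$, combined with the entropy bound on the volume of the Hamming ball) is precisely the standard proof of that cited result, with $C_\delta = 1-h(\delta)$ as the paper defines it. Your caveat about the range of $\delta$ is well taken and is in fact a (harmless) imprecision in the paper's statement: the bound $\sum_{w\le \delta n}\binom{n}{w}\le 2^{h(\delta)n}$ requires $\delta\le 1/2$, and indeed for $\delta$ close to $1$ the claimed inequality cannot hold as stated (any positive-rate code has minimum distance well below $\delta n$ there, so the left-hand side is $1$ while the right-hand side is exponentially small); the theorem should be read as restricted to $\delta\le 1/2$, which covers the regime $\delta\in[0.05,0.11]$ actually used in the paper's security analysis. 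Your closing remark that the rank of $H$ is irrelevant because the bound works directly off the syndrome condition is also correct and worth keeping.
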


For large block lengths, we can see that this bound approaches a step function where for rates $R<C_\delta$, minimum distance is expected to be larger than $\delta n$ except with extremely small probability. For our choices of block lengths, the randomly generated code will satisfy the bound on minimum distance whenever $R<C_{\delta}$, except for some minimal probability that is later added into the $\epsilon$-error of the protocol. We plot this bound in~\suppfig~\ref{fig:reldist} with respect to the parameter $\delta = \frac{d}{n}$, which we refer to as the relative distance.\\
% Also, the existence of a binary linear code with code rate $R > C(\delta)$ is not known, standing by itself as an open problem in cryptography over many decades.

Given values of $p_{\rm err}$ and reasonably small error parameter
$\epsilon$, by referring to the conditions
for minimum distance derived in the security analysis, we obtain the upper
bound on the achievable rate, namely $C_{\delta}$. This guarantees that for
small enough error rates $p_{\rm err}$, it is sufficient to use a randomly generated code for the use of our protocol, which will provide us both a good enough distance and code rate, except with an extremely small probability. By using Theorem \ref{randomcode} we account for the probability of error and add it as a source of error for the execution of the protocol. \\

\subsubsection*{LDPC codes}
Random binary codes are generated by assigning values 0 and 1 randomly to each element of the parity check matrix. They have a high density (large fraction of non-zero elements), which in large block length limit is time-consuming to deal with. For efficiency purposes, it is of interest whether we can construct codes with lower density (less non-zero values).\\

In \cite{gallager}, Gallager has shown that a specific ensemble of low density codes do attain the same limit given for the random codes as above, when considering large enough block lengths. These codes involve using random permutations of a submatrix, and the construction is straightforward. This type of codes can be of future interest, because their usage will shorten the calculational time used in the protocol. However, this is achieved at the expense of introducing an additional error probability of constructing a bad code (one with unsatisfactory minimum distance), which is not straightforward to evaluate. \\

\subsubsection*{Concatenated codes}
In the case of low bit-flip error, classes of explicit concatenated codes might be generated such that the minimum distance is guaranteed without introducing any probabilistic errors from a randomized construction. These codes are constructed by using a Reed-Solomon code as an outer code, while using a smaller binary linear code as an inner code. We state the properties of such concatenated codes in the following theorem:

\begin{theorem}[Concatenated codes]
Given a $[n_1,R_1n_1,d_1]$ outer code, and a linear binary code with parameters $[n_2,R_2n_2,d_2]$. Then the resulting concatenated code has parameters $[n_1n_2,k_1 k_2, d]$, where the code rate $R=R_1R_2$ and $d \geq d_1 d_2$.
\end{theorem}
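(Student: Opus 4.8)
The plan is to prove this by the standard concatenation argument, treating the result as a statement about $\mathbb{F}_2$-linear codes and reducing the distance claim to a statement about minimum weight. First I would fix the concatenation structure precisely: the outer $[n_1, R_1 n_1, d_1]$ code (a Reed--Solomon code) lives over the alphabet $\mathbb{F}_q$ with $q = 2^{k_2}$ and $k_2 = R_2 n_2$, so that each outer symbol is naturally identified with a $k_2$-bit message for the inner code. Encoding then proceeds in two stages: a $k_1 k_2$-bit message is parsed into $k_1$ symbols of $\mathbb{F}_q$ and mapped by the outer encoder to $n_1$ symbols, after which each of these $n_1$ symbols is encoded by the inner $[n_2, R_2 n_2, d_2]$ code into an $n_2$-bit block. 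Concatenating the $n_1$ blocks yields a codeword of length $n_1 n_2$, which immediately gives the claimed block length, dimension $k_1 k_2$, and rate $R = (k_1 k_2)/(n_1 n_2) = R_1 R_2$.

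For the distance bound, I would first observe that the concatenated code is itself $\mathbb{F}_2$-linear: the inner code is linear by hypothesis, and the outer Reed--Solomon code is $\mathbb{F}_q$-linear, hence $\mathbb{F}_2$-linear once $\mathbb{F}_q$ is identified with $\mathbb{F}_2^{k_2}$. Linearity then lets me replace the minimum distance by the minimum Hamming weight over nonzero codewords, so it suffices to show that every nonzero concatenated codeword has weight at least $d_1 d_2$.

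The core of the argument is as follows. Take any nonzero codeword $c$ of the concatenated code. Since the outer encoder is injective, $c$ arises from a nonzero outer codeword, which by definition of $d_1$ has at least $d_1$ nonzero symbols. For each such nonzero symbol, the inner encoder --- being linear and injective --- maps it to a \emph{nonzero} inner codeword, whose Hamming weight is therefore at least $d_2$. Summing the contributions of the (at least $d_1$) nonzero outer symbols, the total weight of $c$ is at least $d_1 d_2$, establishing $d \geq d_1 d_2$.

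The step I expect to require the most care is the clean identification of the outer alphabet $\mathbb{F}_q$ with $k_2$-bit inner messages and the verification that this identification makes the composite map $\mathbb{F}_2$-linear; once that is in place, the weight count is elementary. A secondary subtlety worth flagging is that the result is only a lower bound: the inequality $d \geq d_1 d_2$ may be strict, since nothing forces the several nonzero inner blocks to simultaneously attain weight exactly $d_2$, but a lower bound is all that the statement asserts.
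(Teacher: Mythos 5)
Your proof is correct and is the standard concatenation argument; the paper itself states this theorem without proof, as a known fact about concatenated (Reed--Solomon outer, binary linear inner) codes, and your argument is exactly the canonical one that underlies it. In particular, your identification of the outer alphabet $\mathbb{F}_{2^{k_2}}$ with $k_2$-bit inner messages, the reduction of distance to minimum weight via $\mathbb{F}_2$-linearity, and the count of at least $d_1$ nonzero outer symbols each contributing inner weight at least $d_2$ are all as intended.
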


For example, by exploiting this construction, a linear binary concatenated
code with rate R=0.53 and relative minimum distance $\delta\geq0.052$ can be
constructed, where the code length $n=311296$. This value of $\delta$ has a
large discrepancy compared to the probabilistic argument for a random
code. From here it is clearly shown that, if a definite statement regarding
the minimum distance of such large error-correcting codes (without any
probabilistic errors) is desired, one can still obtain security for smaller
ranges of experimental parameters. For the given example of concatenated RS
code, this corresponds to security for bit flip error rates $p_{\rm err}\leq 0.02$, which exceeds the value obtained in our experiment.

\section{Supplementary Methods}
\subsection{Experimental parameters\label{section:expparams}}

To analyze our bit commitment protocol in any practical experiment, several probabilities have to be determined. 
\supptab~\ref{probabilities} summarizes all the probabilities we will need to estimate. We emphasize that all such probabilities are conditioned on the event that Alice registers a round, i.e. sees a valid click.

A difficulty in estimating the probabilities of success in a ``round'' arises
from the fact that generation of photon pairs in a parametric down
conversion source is a stochastic process. Furthermore, losses in the system
may occur in the source or in detectors, and we do not have an easy way of
assessing the losses reliably. We thus try to estimate bounds of the required
probabilities for the bit commitment protocol out of observable quantities
both Alice and Bob can agree upon. For this purpose, we model losses and
background events in our system in a way shown in \suppfig~\ref{fig:model}.

The rates (i.e., events per unit of time) observed at Alice are then given by
\begin{equation}
r_{\rm A}=\eta_{\rm A}(r_{\rm s}+r_{\rm bA})\,,\label{eq:singlesAlice}
\end{equation}
where $\eta_{\rm A}$ indicates the detection efficiency
and $r_{\rm bA}$ a background event rate; a similar expression holds for Bob. 
The observed coincidence rate in this model is given by
\begin{equation}
r_{\rm p}=\eta_{\rm A}\eta_{\rm B}r_{\rm s}+r_{\rm acc}\,.
\end{equation}
where $r_{\rm acc}$ reflects the so-called accidental coincidence rate, caused
by detection events on both sides happening within the coincidence time window
$\tau_{\rm c}$ that are not due to valid clicks form the same photon
pair. This rate can be bounded from observed rates $r_{\rm A}$ and $r_{\rm B}$ to 
\begin{equation}
r_{\rm acc}<r_{\rm acc}^{\rm max}=r_{\rm A}r_{\rm B}\tau_{\rm c}\,,
\end{equation}
assuming that all detection events on both sides are caused by uncorrelated
events. In our experiment, this quantity would result in a value of $r_{\rm
  acc}^{\rm max}=14.9\pm 0.18$\,s$^{-1}$, and is negligible compared to the
observed coincidence rate $r_{\rm s}$. This quantity was independently assessed by
recording the rate of detection time pairings $t_{\rm A},t_{\rm B}$ in a time window that
was displaced by $\tau_{\rm d}=20$\,ns from the ``true'' coincidences, i.e.,
$|t_{\rm A}-t_{\rm B}-\tau_{\rm d}|\le\tau_{\rm c}$ \cite{marcikic:06}. We 
found a rate of $r_{\rm acc}=5.3\pm3.3$\,s$^{-1}$ over the course of several
bit commitment runs. Since $r_{\rm acc}\ll
r_{\rm p}$, we from now on neglect these events in the rate estimations, and
interpret their occurence just as events that increase the error ratio.

To evaluate the probability $p^1_{\rm sent}$ that exactly one photon was sent
to Bob in the interval $\tau_{\rm c}$ around a time when Alice has seen an
event, we first consider the probability $p^0_{\rm sent}$ that no photon was
sent to Bob, given Alice has seen an event. This can only be caused by a
background event with Alice. Thus, $p^0_{\rm sent}$ equals the probability
that a detection event on Alice's side is caused by background, which 
is given by 
\begin{eqnarray}
p^0_{\rm sent}&=&{r_{\rm bA}\over r_{\rm bA}+r_{\rm s}}=1-{r_{\rm s}\over r_{\rm bA}+r_{\rm s}}\nonumber \\
&=&1-{\eta_{\rm A} r_{\rm s}\over\eta_{\rm A}(r_{\rm bA}+r_{\rm s})}=1-{\eta_{\rm A}r_{\rm s}\over r_{\rm A}}\nonumber \\
&=&1-{r_{\rm p}\over\eta_{\rm B} r_{\rm A}}\label{eq:p0sent}
\end{eqnarray}
Since the efficiency $\eta_{\rm B}$ is not known exactly, we set it to 1 and thereby
obtain an upper bound for  $p^0_{\rm sent}$:
\begin{equation}
p^0_{\rm sent}<1-{r_{\rm p}\over r_{\rm A}}=0.875\pm0.009
\end{equation}

Next, we consider the probability $p^{n>1}_{\rm sent}$ that more than one
photon has been sent to Bob, given that Alice has seen an event. This
probability is the product of the probability 
that Alice's event was caused by a photon pair, and the probability that at
least one other photon pair than the one causing the event on Alice's side was
generated in the coincidence time window $\tau_{\rm c}$.
From equation~\ref{eq:p0sent}, the first probability is given by
$r_{\rm p}/(\eta_{\rm B}r_{\rm A})$. For the latter, we consider the statistics of photon pairs
emerging from a continuously pumped SPDC source. While light emerging from a
downconversion process is known to follow thermal photon counting statistics,
the coherence 
time of the photons in our case (0.73\,ps for an optical bandwidth of
3\,nm) is much shorter than $\tau_{\rm c}$. In this case, the statistics of several
photon pairs in time window $\tau_{\rm c}$ follows a Poisson distribution. Since the
creation of an additional photon pair is then independent of the
first photon pair, and the probability that no photon pair is created in
$\tau_{\rm c}$ is given by $e^{-r_{\rm s}\tau_{\rm c}}$, the probability of creating at least
one more photon pair is given by $1-e^{-r_{\rm s}\tau_{\rm c}}$. This brings us to
\begin{eqnarray}
p^{n>1}_{\rm sent}&=&{r_{\rm p}\over \eta_{\rm B}r_{\rm A}}(1-e^{-r_{\rm s}\tau_{\rm c}})\nonumber \\
&<&{r_{\rm p}\over \eta_{\rm B}r_{\rm A}}r_{\rm s}\tau_{\rm c} = 
{r_{\rm p}\over \eta_{\rm B}r_{\rm A}}{r_{\rm p}\over \eta_{\rm A}\eta_{\rm B}}\tau_{\rm c}\nonumber \\
&=&{r_{\rm p}^2\over r_{\rm A}\eta_{\rm A}\eta_{\rm B}^2}\label{eq:pg1interim}
\end{eqnarray}
The efficiencies $\eta_{\rm A}$, $\eta_{\rm B}$ are not accessible directly from the
experiment, but can be bounded by $\eta_{\rm A}>r_{\rm p}/r_{\rm B}$ and $\eta_{\rm B}>r_{\rm p}/r_{\rm A}$ via
\ref{eq:singlesAlice}.
With this, we can further bound expression~\ref{eq:pg1interim} and arrive at
\begin{equation}
p^{n>1}_{\rm sent}<{r_{\rm A}r_{\rm B}\over r_{\rm p}}\tau_{\rm c} = 5.32\pm0.17\times10^{-4}\,,
\end{equation}
which is much smaller than the uncertainty on $p^0_{\rm sent}$.  With this, we
arrive at 
\begin{equation}
p^1_{\rm sent}=1-p^0_{\rm sent}-p^{n>1}_{\rm sent}>0.125\pm0.009
\end{equation}
and
\begin{equation}\label{a9}
p^1_{\rm sent}+p^0_{\rm sent}=1-p^{n>1}_{\rm sent}>0.99947\pm0.000017
\end{equation}

Finally, the probability for an honest Bob not seeing an event in a
coincidence time window if Alice has detected something is the complement to
the probability that Bob sees something if Alice has seen something. The
latter, by definition, is given by the ratio $r_{\rm p}/r_{\rm A}$. Thus, we have
\begin{equation}
p_{\rm B, no click}^{\rm h} = 1-r_{\rm p}/r_{\rm A} = 0.875\pm0.009\,.
\end{equation}

\subsection{Symmetrizing losses}\label{sec:symLoss}

In practice, not all detectors have the same efficiency. Losses will be higher for some detectors than for others. This will lead to imbalances in the choice of basis and the choice of BB84 encoded qubit. In our protocol, such imbalances affect the security in two places. First, if Alice is honest, but Bob is trying to cheat, such imbalances give him additional information about which bit or basis was used. His advantage is similar to the advantage that an eavesdropper in QKD would gain from knowing such extra information. Second, if Bob is honest, but Alice is trying to cheat, having higher losses in one basis does reveal information to Alice in which basis Bob measured - if Bob does not report a loss it is more likely that he used the basis for which losses occur less often.

We describe a method to deal with such imbalances securely - the same method can be used to address imbalances on Alice's and Bob's side. For simplicity, we outline the procedure in detail for Alice; exactly the same method can be used to symmetrize Bob's detectors.
The essential idea is to make all detectors equally inefficient, by throwing away (i.e., declaring as lost) rounds where detectors with higher efficiencies registered a click. Note that in our protocol, Alice can discard additional rounds without consequences for security parameters. Meanwhile, discarding additional rounds on Bob's side increases $p_{\rm B, no click}^{\rm h}$. Detection events combining with such post-processing procedures, define the occurrence of a valid round. In other words, if a single click occurred on both sides and was not manually discarded for symmetrizing purposes, this event is considered a valid round.

In our setup, Alice has four detectors, one for each bit in each basis. Let $x,\theta$ label the detector corresponding to a bit $x \in \01$ in basis $\theta \in \01$. 
Let $p_{\theta}$ denote the probability that basis $\theta$ is chosen, and let $p_{x|\theta}$ denote the probability that bit $x$ occurs given basis $\theta$. Finally, let
$t_{x,\theta}$ denote the probability that Alice keeps bit $x$ in basis $\theta$ when the detector $x,\theta$ clicks. That is, Alice discards bit $x$ in basis $\theta$ with probability 
$1-t_{x,\theta}$ even though a click occurred. Our goal will be to determine the $t_{x,\theta}$ that renders $\Pr[x,\theta|{\rm keep}]$, the probability that $x,\theta$ occurs conditioned on the event that Alice keeps a particular detection event the same for all $x$ and $\theta$.

First of all, note that the probability that a particular detection event is \emph{not} discarded, i.e. Alice accepts it as a round, can be written as
\begin{align}
	\Pr[{\rm keep}] = \sum_{x,\theta \in \01} p_{\theta} p_{x|\theta} t_{x,\theta}\ .
\end{align}
By Bayes' rule
\begin{align}
	\Pr[x,\theta|{\rm keep}] &= \frac{\Pr[{\rm keep}|x,\theta] \Pr[x,\theta]}{\Pr[{\rm keep}]}\\
	&= \frac{t_{x,\theta} p_{x|\theta} p_{\theta}}{\Pr[{\rm keep}]}\ .
\end{align}
Ideally, all probabilities are the same, i.e., for all $x$ and $\theta$
\begin{align}
	\Pr[x,\theta|{\rm keep}] = \frac{1}{4}\ .
\end{align}
This yields $4$ equations, in $3$ free parameters since $\sum_{x,\theta} t_{x,\theta} = 1$. These can easily be solved
for $t_{x,\theta}$. 

For our setup, the parameters for symmetrization on Alice' side are as follows:
\begin{eqnarray}
&& t_{0,0} = 1 \nonumber\\
&& t_{0,1} = 0.963077 \nonumber\\
&& t_{1,0} = 0.882305 \nonumber\\
&& t_{1,1} = 0.871353.
\end{eqnarray}
Symmetrization on Bob's side is dealt with in the same manner. This
however increases the value of $p_{\rm B, no click}^{\rm h}$, since now an honest Bob deliberately throws away more clicks. This leads to a new value of
\begin{equation}
\tilde{p}_{\rm B, no click}^{\rm h} = 1 - (1-p_{\rm B, no click}^{\rm h})\cdot \Pr\left[{\rm keep}\right].
\end{equation}
For our setup, the parameters for symmetrization on Bob's side are:
\begin{eqnarray}
&& t_{0,0} = 0.679745 \nonumber\\
&& t_{0,1} = 1 \nonumber\\
&& t_{1,0} = 0.665591 \nonumber\\
&& t_{1,1} = 0.662890.
\end{eqnarray}

The probability of Bob keeping a click during symmetrization is $\Pr\left[{\rm keep}\right]=0.729646$. This combining with the initial estimate of $p_{\rm B, no click}^{\rm h}$ gives $\tilde{p}_{\rm B, no click}^{\rm h}=0.909$, implying a high amount of losses. Even so, the protocol remains secure due to the fact that the source provides multi-photons to Bob with an extremely small probability, whenever Alice observes only a single detection event. In other words, $p_{\rm sent}^1 + p_{\rm B, no click}^{\rm d}$ is extremely high, as stated in \eqref{a9}. In such cases, even a high amount of losses do not compromise security of the protocol.

Also, it should be stressed that $p_{\rm err}$ should be evaluated for the set of data after all symmetrization procedures, since there can be bias in the error rates for each bit and basis. For the set of symmetrized data, $p_{\rm err}=0.0412$, in comparison with before symmetrization, $p_{\rm err}=0.0428$.

\subsection{Theoretical security analysis}\label{sec:analysis}

In the security proof, we divide the protocol into two parts: the first part is Weak String Erasure with Errors (WSEE), and the remaining procedure is Bit Commitment (BC). 

\subsection*{Theoretical parameters}\label{theoreticalparam}
Next to the experimental parameters defined in the \supptab~\ref{probabilities}, our analysis will make frequent use of the following parameter definitions. There are two more basic parameters in this analysis: $M$ and $\epsilon$.

The parameter $\epsilon$ represents a fixed error parameter, i.e., we want to achieve security up to an error of $O(\eps)$. This parameter is used to bound the occurrence probability of bad events, and we need to frequently refer to it throughout the analysis. 
Such bounds are achieved by making use of the Hoeffding inequality. It says that given a random variable $X_{j} \in \lbrace0,1\rbrace$, where $\Pr(X_{i}=0)=1-p$, $\Pr(X_{i}=1)=p$, and
$Y=\Sigma_{i=1}^{N} X_{i}$ we have
\begin{equation}
\Pr [ Y \leq (p-\alpha)N] = \Pr [Y \geq (p+\alpha)N] = e^{-2\alpha^{2}N}.
\end{equation}
The way we will use the Hoeffding inequality is that we demand that $e^{-2 \alpha^2 N} \leq \eps$, and then solve for $\alpha$ such that
our demand is satisfied.

Meanwhile, $M$ denotes the number of signals that Alice counted as valid, i.e., she registers a round (but not necessarily Bob as well).

Based on $\epsilon$ and $M$, we will need the following definitions:
\begin{align}\label{param}
\zeta_{\rm B, no click}^{\rm h} &:= \sqrt{\frac{\ln \frac{2}{\epsilon}}{2M}}\nonumber\\
n &:= (1-p_{\rm B, no click}^{\rm h}-\zeta_{\rm B, no click}^{\rm h}) M \nonumber\\
\alpha_1 &:=\sqrt{\frac{\ln\frac{1}{\epsilon}}{2n}}\nonumber\\
m &:= \left(\half - \alpha_1\right)n\nonumber\\
\alpha_2 &:= \sqrt{\frac{\ln\frac{2}{\epsilon}}{2m}}\nonumber\\
\alpha_3 &:= \sqrt{\frac{\ln\frac{1}{\epsilon}}{d}}
\end{align}
where $d$ is the minimum distance of the error-correcting code used in the protocol, and $n$ is the number of valid rounds that remain.

\subsection*{Weak string erasure}

\subsubsection*{Definition}
We first provide an informal definition of weak string erasure with errors (WSEE). A formal definition can be found in~\cite{noisy:new}. 
When both Alice and Bob are honest, an $(n,\lambda,\epsilon,p_{\rm err})$-WSEE scheme provides Alice with a string $X^n$ and Bob with a 
randomly chosen subset $\cI\in [n]$, as well as a substring $\tilde{X}_{\mathcal{I}}$. 
This substring is thereby given by the substring $X_{\mathcal{I}}$ (the bits of $X^n$ corresponding to the indices in $\mathcal{I}$) passed through a binary symmetric channel that flips each bit of $X_{\mathcal{I}}$ with probability $p_{\rm err}$.

To specify the security condition against dishonest Bob, we first need to 
quantify the uncertainty of Bob about $X^n$, given access to the entire system of a dishonest Bob denoted as ${\rm B'}$. This is done by lower bounding the min-entropy of $X^n$ conditioned on Bob's information,
\begin{eqnarray}
\hmin(X^n|{\rm B'})_{\rho_{X^n{\rm B'}}}&:=&-\log P_{\rm guess} (X^n|{\rm B'}) \nonumber\\[0pt]
P_{\rm guess}(X^n|{\rm B'}) &:=& \max_{ \lbrace D_x \rbrace_x} \displaystyle\sum_x P_X(x) \tr(D_x\rho_x),
\end{eqnarray}
where $P_{\rm guess}$ is referred to as the guessing probability, namely the probability that Bob correctly guesses $X^n$, maximized over all measurement strategies upon his system ${\rm B'}$~\cite{krs:operational}. 
The \textit{$\epsilon$-smooth min-entropy} is defined as 
\begin{equation}
\hmineps (X^n|{\rm B'})_{\rho_{X^n{\rm B'}}} := \sup_{\rho'}~\hmin(X^n|{\rm B'})_{\rho'}
\end{equation}
maximized over all states $\rho'$ such that the purified distance $C(\rho',\rho_{X^n{\rm B'}})= \sqrt{1-F^2 (\rho',\rho_{X^n{\rm B'}})} \leq \epsilon$, where $F(\rho,\tau)$ denotes the fidelity of states $\rho$ and $\tau$. Intuitively, this quantity behaves like the min-entropy, except with a probabilistic error $\epsilon$. \\

We can now state the security conditions for an $(n,\lambda,\epsilon,p_{\rm err})$-WSEE:\\[-4pt]

\noindent{\bf 1. Security for Alice:} If Alice is honest, then the amount of information a dishonest Bob holds about $X^n$ is limited, i.e. the $\epsilon$-smooth min entropy of $X^n$ conditioned on a dishonest Bob's information is lower bounded
\begin{equation}
\frac{1}{n}~\hmineps (X^n|{\rm B'}) \geq \lambda,
\end{equation}
where $\lambda$ is referred to as the smooth min-entropy rate.\\[-4pt]

\noindent{\bf 2. Security for Bob:} If Bob is honest, then Alice does not have any information $\cI$. That is, Alice does not learn which bits of $X^n$ are known to Bob.

\subsubsection*{Protocol}

In principle, WSEE can be achieved experimentally by using any QKD device. However, we emphasize that the 
experimental requirements and analysis differs entirely. In particular, security of QKD for a particular setup does not imply security of bit commitment. 

Recall from the informal statement of the protocol in the main part of our paper that if Alice herself concludes that no photon or a multi-photon
has been emitted in a particular time slot, she simply discards this event and tells
Bob to discard it as well. Since this action represents no security problem for us, we will
for simplicity omit these events all-together when stating the more detailed protocol below. 
This means that $M$ in the protocol below, actually refers to the set of
post-selected pulses that Alice did register as a round.
In practice, Alice reports
the missing events to Bob after the waiting time has passed. In principle, this could be used to obtain better security bounds
as Bob does not yet know which bits are indeed relevant when he uses his storage device. However, we leave such a refined analysis for future work.

In addition, introducing time slots enables Bob to report a
particular bit as missing, if he obtained no click in a particular
time slot. Alice and Bob will subsequently discard all lost rounds.
In the protocol below, we assume the detectors have already been symmetrized appropriately as outlined in the \suppmtd~\ref{sec:symLoss}. The purpose of symmetrizing is to ensure that losses are independent of basis choice, hence Alice cannot obtain any information about $\cI$ by observing the rounds reported lost by Bob.

\begin{protocol}{1}{Weak String Erasure with Errors (WSEE)}
{Outputs: $x^n \in \sbin^n$ to Alice, 
$(\cI,z^{|\cI|}) \in 2^{[n]} \times \sbin^{|\cI|}$ 
to Bob.}\label{proto:wse}
\item[1.] {\bf Alice:} Chooses a string $x^M \in_R \01^M$ and basis-specifying 
string $\theta^M \in_R \01^M$ uniformly at random. 
\item[2.] {\bf Bob: } Chooses a basis string $\tilde{\theta}^M \in_R \01^M$ uniformly at random. 
\item[3.] In time slot $i=1,\ldots,M$: 
\begin{enumerate}
\item {\bf Alice:}
Encodes bit $x_i$ in the basis $\theta_i$ (i.e., as $H^{\theta_i}\ket{x_i}$), and sends
the resulting state to Bob.
\item {\bf Bob:} Measures in the basis
given by $\tilde{\theta}_i$ to obtain outcome $\tilde{x}_i$.
If Bob obtains no click in this time slot, he records round $i$ as lost.
\end{enumerate}

\item[4.] {\bf Bob: } Reports missing rounds to Alice.

\item[5.] {\bf Alice: } If the number of rounds that Bob reported missing does not lie in the 
interval $[(p_{\rm B, no click}^{\rm h} - \zeta_{\rm B, no click}^{\rm h})M,(p_{\rm B, no click}^{\rm h} + \zeta_{\rm B, no click}^{\rm h})M]$, then Alice aborts 
%interval $[(p^h_\lostB - \zeta^h_\lostB)M,(p^h_\lostB + \zeta^h_\lostB)M]$, then Alice aborts 
the protocol. Otherwise, she deletes all bits from $x^M$ that Bob reported missing. Let $x^n \in \01^n$
denote the remaining bit string, and let $\theta^n$ be the basis-specifying string for the remaining rounds.
Let $\tilde{\theta}^n$, and $\tilde{x}^n$ be the corresponding strings for Bob.

\item[Both parties wait time $\Delta t$.]

\item[6.] {\bf Alice: } Sends the basis information $\theta^n$ to Bob, and outputs $x^n$.
\item[7.] {\bf Bob: } Computes $\cI := \{i \in [m] \mid \theta_i = \tilde{\theta}_i\}$, and outputs $(\cI,z^{|\cI|}):=(\cI,\tilde{x}
_{\cI})$.
\end{protocol}

How large is $n$ going to be? Since Alice aborts if Bob reports too many
rounds as missing, we have that $n \geq (1 - p_{\rm B, no click}^{\rm h} -
\zeta_{\rm B, no click}^{\rm h}) M$.
If a fixed $n$ is desired, we can take $n =  (1 - p_{\rm B, no click}^{\rm h} -
\zeta_{\rm B, no click}^{\rm h}) M$ as in~\eqref{param}, 
where Alice randomly truncates the resulting string, and informs Bob about the truncation. This is the approach we take here.
In our protocol, there is also a possibility that Alice aborts. An abort here means that Alice simply generates a random $x^n$ as output.
This means that our protocol does at all times generate a string of length $n
= (1 - p_{\rm B, no click}^{\rm h} - \zeta_{\rm B, no click}^{\rm h}) M$.

\subsubsection*{Analysis}

The analysis of weak sting erasure with errors has already been performed in~\cite{Curty10}. 
Essentially, losses allow a dishonest Bob to discard a fraction of single-photon detection events, and keep more multi-photon events so that his chance of guessing $X^n$ correctly is increased. 
The resulting min-entropy rate $\lambda$ can thereby be calculated as a function of experimental parameters listed in \supptab~ \ref{probabilities}. 
That is, the min-entropy rate is a function of $p_{\rm sent}^1$, $p_{\rm B, no click}^{\rm h}$,  $p_{\rm B, no click}^{\rm d}$, and $M$
\begin{align}
\lambda = \lambda(p_{\rm sent}^1,p_{\rm B, no click}^{\rm h}, p_{\rm B, no click}^{\rm d},M)\ .
\end{align}
In our further analysis, we will hence assume that WSEE has been shown secure, and state security as a function of a fixed 
parameter $\lambda$ and fixed $n$. We then combine the two to give explicit security paramters as a function of the 
experimental parameters.

\subsection*{Bit commitment from weak string erasure}

What remains is to analyse security of the bit commitment protocol(BC) based on WSEE. An informal definition of BC was stated in the introduction, and a formal one can be found in~\cite{noisy:new}.
The protocol below is very similar to the one proposed in~\cite{noisy:new}, which gave a BC protocol for weak string erasure \emph{without} errors (i.e., $p_{\rm err} = 0$). 
To address the case of $p_{\rm err} > 0$, we introduce modifications to the BC protocol, allowing the modified protocol to stay secure up to a certain amount of bit flip error from the experimental setup. 

\subsubsection*{Protocol}

We present the fully modified BC protocol as Protocol 2, by including WSEE as a sub-protocol. Our protocol allows Alice to commit
a string $D^l \in \01^l$ to Bob. However, for our experiment we chose to commit only a single bit $l=1$ which is the scenario typically considered in bit commitment. Our protocol makes use of the parameters defined in~\eqref{param}.

\begin{protocol}{2}{Non-Randomized Bit Commitment(BC)}
{By using an binary linear error-correcting code $\mathcal{C}$, let Syn:$\lbrace0,1\rbrace^n \rightarrow \lbrace0,1\rbrace^{n-k}$ be the function that outputs the parity check syndrome for $\mathcal{C}$.
Also, select Ext: $\lbrace0,1\rbrace^n \times \mathit{R} \rightarrow \lbrace0,1\rbrace ^l$ from a set of 2-universal hash functions. }\label{proto:bc}
\item[(A)]\textbf{Commit Phase}
\item[1.] {\bf Alice and Bob :} execute $(n,\lambda,\epsilon,p_{\rm err})$WSEE. Alice obtains $X^n$ while Bob obtains $\tilde{X}_\mathcal{I}$ and $\mathcal{I}$.
\item[2.] {\bf Bob :} Checks if $|\cI| \geq m$. If so, he randomly truncates $\cI$ until $|\cI|=m$. Otherwise, he aborts the protocol.
\item[2.] {\bf Alice:} \\
a) computes w=Syn($X^n$) and sends it to Bob.\\
b) picks a 2-universal hash function $r \in_R \mathcal{R}$ and sends it to Bob.
\item[3.] {\bf Alice:} Commits $C^l \in \lbrace0,1\rbrace^l$ by computing $D^\mathit{l}=\Ext(X^n,r)$ and sends $E^l=C^l \oplus D^l$ to Bob.
\item[(B)]\textbf{Open Phase}
\item[1.] {\bf Alice:} reveals the complete string $X^n$ to Bob.
\item[2.] {\bf Bob:} Perform checks:\\
	a) computing the syndrome and check that it agrees with $w$ sent by Alice.\\
	b) checking $X^n$ against $\tilde{X}_\mathcal{I}$, ensuring that the number of bits that disagree at positions in $\cI$
	lie in the interval [$(p_{\rm err}-\alpha_2)m,(p_{\rm err}+\alpha_2)m$].
\item[3.] {\bf Bob:} If conditions are satisfied, he accepts commitment and calculates $D^\mathit{l} = \Ext (X^n,r)$. Both of them output $C^l$.
\end{protocol}

The are two modifications in this protocol compared to its previous version in \cite{noisy:new}. The first is the use of a different error-correcting code $\mathcal{C}$. We will first provide a general analysis to prove the security of bit commitment, given that several conditions on both the rate and relative minimum distance of the error-correcting code used are met. We then show that by generating a binary linear code at random, the conditions on minimum distance and rate can be satisfied, except for a small probabilistic error that can be later added upon the total security error of the protocol. 

Note that random codes do not pose a problem when executing the protocol since honest parties never need to decode. Details of the properties of error-correcting codes can be found in the \supdis. Secondly, the checks performed by Bob in the open phase have been modified to account for the existence of bit flip errors, such that Bob tolerates a certain limited amount of errors.

\subsubsection*{Analysis}

Intuitively, bit flip errors in WSEE give Alice more freedom to cheat, as a
malicious Alice can avoid the errors and choose to corrupt the string $X^n$
herself. In other words, the actual bit flip error $p_{\rm err}$ in such a scenario equals zero. This makes it harder for Bob to identify a cheating Alice. This is because whenever he finds a discrepancy between $X_\mathcal{I}$ and $X^n$, he cannot be sure if it was due to a bit flip error or a malicious Alice.\\

We now proceed to prove that the protocol is secure, except for a minimal probability $\epsilon$ when $p_{\rm err}$ is sufficiently small. The proof is done in three steps as shown:
\begin{list}{\arabic{qcounter}:~}{\usecounter{qcounter}}
\item[1. Correctness:] If both parties are honest, Bob always accepts the commitment except with minimal probability.
\item[2. Security against Alice:] For any dishonest Alice, Bob detects her attempt to cheat and rejects the commitment except with minimal probability.
\item[3. Security against Bob:] For any dishonest Bob, he does not obtain information about the committed bit except with minimal probability.
\end{list}

\paragraph{Correctness}

We shall first prove the correctness of the protocol, namely under the situation that Alice and Bob are both honest, Bob always accepts the protocol except with some minimal probability $\epsilon$.

\begin{lemma}[Correctness of the protocol]\label{correctness}
If Alice and Bob are both honest, then the protocol is $2\epsilon$-correct.
\end{lemma}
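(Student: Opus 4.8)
The plan is to read correctness as the assertion that, when both parties are honest and the underlying WSEE instance has produced its honest output (Alice holds $X^n$, while Bob holds $\cI$ together with $\tilde X_{\cI}$, the substring $X_{\cI}$ passed through a binary symmetric channel with crossover probability $p_{\rm err}$), honest Bob fails to accept with probability at most $2\epsilon$. First I would observe that there are only two places where honest Bob can refuse to accept: the check $|\cI|\geq m$ in step~2 of the commit phase, and part~(b) of his final check in the open phase. Part~(a) of the open-phase check is automatic, since honest Alice reveals the genuine $X^n$ whose syndrome equals the $w=\Syn(X^n)$ she already sent, so the recomputed syndrome always agrees and (a) contributes no error. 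It therefore suffices to bound these two bad events and apply a union bound, matching each to one of the $\epsilon$-budgets fixed in~\eqref{param}.

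For the first bad event, note that $\cI=\{i\in[n]\mid \theta_i=\tilde\theta_i\}$ counts the rounds where Alice's and Bob's independent uniform basis choices coincide, so $|\cI|$ is a sum of $n$ independent Bernoulli$(1/2)$ variables with mean $n/2$. Applying the Hoeffding bound with $p=\half$, $N=n$ and $\alpha=\alpha_1$, and using $m=(\half-\alpha_1)n$ with $\alpha_1=\sqrt{\ln(1/\epsilon)/(2n)}$ from~\eqref{param}, gives
\begin{equation}
\Pr[\,|\cI|<m\,]\;\leq\; e^{-2\alpha_1^2 n}\;=\;e^{-\ln\frac{1}{\epsilon}}\;=\;\epsilon,
\end{equation}
so Bob aborts in step~2 with probability at most $\epsilon$.

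For the second bad event I would condition on $|\cI|\geq m$, so that Bob truncates to an index set of size exactly $m$. The number of positions in this set where the revealed $X^n$ disagrees with $\tilde X_{\cI}$ is precisely the number of bits flipped by the binary symmetric channel there; since the channel acts independently on each bit with flip probability $p_{\rm err}$, this count is a sum of $m$ independent Bernoulli$(p_{\rm err})$ variables with mean $p_{\rm err}m$. Applying the two-sided Hoeffding bound with $p=p_{\rm err}$, $N=m$ and $\alpha=\alpha_2$, and using $\alpha_2=\sqrt{\ln(2/\epsilon)/(2m)}$, each tail is at most $e^{-2\alpha_2^2 m}=\epsilon/2$, so the probability that the disagreement count leaves $[(p_{\rm err}-\alpha_2)m,(p_{\rm err}+\alpha_2)m]$ is at most $\epsilon$. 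A union bound with the first event then yields a total failure probability of at most $2\epsilon$, as claimed.

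The individual steps are short; the part that needs care is justifying that the two counts genuinely have the clean binomial laws Hoeffding requires. Concretely, one must verify that conditioning on $|\cI|\geq m$ and performing the random truncation does not disturb the Bernoulli$(p_{\rm err})$ statistics of the disagreements — which holds because the channel's bit flips are independent across positions and independent of both the basis matching and the truncation choice — and that the WSEE output indeed realises $\tilde X_{\cI}$ as $X_{\cI}$ through a binary symmetric channel, the WSEE correctness guarantee imported from~\cite{Curty10}. I would deliberately keep the WSEE abort event out of this count, since it belongs to the (separately established) correctness of WSEE; the remaining work is simply the bookkeeping that pins the two $\epsilon$ error budgets to the thresholds $\alpha_1$ and $\alpha_2$ of~\eqref{param}.
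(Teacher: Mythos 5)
Your proposal is correct and follows essentially the same route as the paper's proof: the same decomposition into the two bad events (Bob aborting because $|\cI|<m$, and the disagreement count falling outside $[(p_{\rm err}-\alpha_2)m,(p_{\rm err}+\alpha_2)m]$), the same Hoeffding applications with the parameters $\alpha_1$ and $\alpha_2$ from~\eqref{param}, and the same union bound giving $2\epsilon$. The extra remarks you add — that the syndrome check is automatic for honest Alice and that the truncation does not disturb the Bernoulli$(p_{\rm err})$ statistics — are sound and merely make explicit what the paper leaves implicit.
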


\begin{proof}
There are two steps in this proof. Firstly, we show that Bob receives at least
$m$ bits from WSEE except with probability $\epsilon$. Secondly, we prove that
the number of erroneous bits Bob picks up is close to the expected value
$p_{\rm err} m$, except for probability $\epsilon$. The total probability of error for either events occurring is then $2\epsilon$ since failure occurs in either case.\\

Since each bit $X_i$ from Alice is obtained by Bob with probability $\frac{1}{2}$, by applying the Hoeffding's inequality to the random variable $Y=|\mathcal{I}|$, i.e. the length of substring Bob obtains from WSEE, we see that
\begin{equation}
	\Pr [ Y \leq (\frac{1}{2}-\alpha_1)n] \leq e^{-2\alpha_{1}^{2}n} \leq \epsilon,
\end{equation}
where n is the length of string $X^n$ Alice has. Using $\alpha_1$ defined in \eqref{param} allows Bob to get at least $m$ bits except for probability $\epsilon$. Note that the probability $1/2$ is determined by Bob's random choice of basis, and is independent 
of $p_{\rm err}$.\\

We then proceed to show that the number of erroneous bits Bob expects to
obtain lie within the interval $[(p_{\rm err}-\alpha_2)m,(p_{\rm err}+\alpha_2)m]$ except for some probability $\epsilon$. Note that since previously we have shown that Bob obtains at least $m$ bits, we can safely fix the length of Bob's substring to be $m$. 

Again by applying the Hoeffding's inequality, with the random variable of interest Z to be the number of erroneous bits Bob obtains,
\begin{equation}
	\Pr [ |Z-p_{\rm err}m| \geq \alpha_2m] \leq 2e^{-2\alpha_{2}^{2}m} \leq \epsilon
\end{equation}
again by using $\alpha_2$ as defined in \eqref{param}. Hence the correctness of the protocol is guaranteed except for probability $2\epsilon$.
\end{proof}

\paragraph{Security against dishonest Alice}

We now proceed to prove security against Alice. Recall that a malicious Alice
can avoid bit flip errors and tamper with the bit string directly. We need to show that no matter how Alice tampers with the string,
Bob will detect her cheating with probability close to unity.\\

Previously~\cite{noisy:new} for $p_{\rm err} = 0$ , whenever Bob checks $X_\cI$ against $X^n$
and finds one faulty bit, he aborts the protocol directly. However, in a
realistic setup Bob accepts a number of roughly $p_{\rm err} m$ bits. By properties of the error-correcting code, we know from \cite{noisy:new} that for any attack of Alice she has to change at least $\frac{d}{2}$ such that the Bob will accept the syndrome to be 
consistent~\cite{noisy:new}. 
With this, we set a constraint on the code distance used such that whenever Alice attempts to cheat, Bob picks up enough faulty bits to detect the cheating except for some minimal probability.

\begin{lemma}[Security against Alice]\label{secA}
If Bob is honest, given that Alice and Bob use an error-correcting code with
minimum distance $d > \frac{2(p_{\rm err}+\alpha_2)(\frac{1}{2}-\alpha_1)n}{\frac{1}{2}-\alpha_3}$, the pair of protocols(Commit, Open) is $2\epsilon$-binding.
\end{lemma}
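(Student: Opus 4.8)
The plan is to reduce the binding property to a combinatorial statement about the overlap between the set $\Delta$ of positions where two candidate openings differ and the test set $\mathcal{I}$ that honest Bob uses in part (b) of his open-phase check, and then to kill that overlap using Alice's ignorance of $\mathcal{I}$. First I would set up the cheating scenario. Since $w=\Syn(X^n)$, the hash function $r$, and $E^l$ are all fixed during the commit phase, a dishonest Alice who makes Bob accept two different openings $C^l\neq\hat C^l$ must reveal two strings $X^n\neq\hat X^n$ in the open phase; otherwise $\Ext(X^n,r)$, and hence $C^l=E^l\oplus\Ext(X^n,r)$, would coincide. Both strings must pass Bob's syndrome check, so $\Syn(X^n)=\Syn(\hat X^n)=w$, i.e. they lie in the same syndrome coset. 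Because the code has minimum distance $d$, two distinct strings in one coset differ in Hamming distance at least $d$, so $|\Delta|\geq d$, where $\Delta:=\{i\in[n]:X_i\neq\hat X_i\}$.

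Next I would run the overlap argument. Let $e$ and $\hat e$ be the number of positions of $\mathcal{I}$ at which $X^n$, respectively $\hat X^n$, disagrees with Bob's recorded substring $\tilde X_{\mathcal I}$. For Bob to accept both openings, part (b) of his check forces $e\leq(p_{\rm err}+\alpha_2)m$ and $\hat e\leq(p_{\rm err}+\alpha_2)m$. At every position $i\in\mathcal{I}\cap\Delta$ one has $X_i\neq\hat X_i$, so Bob's bit $\tilde X_i$ can agree with at most one of them; hence at least one of the two strings is counted as an error there. Summing over $\mathcal{I}\cap\Delta$ gives $e+\hat e\geq|\mathcal{I}\cap\Delta|$, so a successful double opening requires
\begin{equation}
|\mathcal{I}\cap\Delta|\leq 2(p_{\rm err}+\alpha_2)m .
\end{equation}

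The heart of the proof is to show this is very unlikely. Here I would invoke security for Bob in WSEE: because the losses have been symmetrized, Alice learns nothing about $\mathcal{I}$, i.e. about which rounds Bob measured in the matching basis. Consequently $\Delta$, which is fixed by Alice's strategy, is independent of $\mathcal{I}$, and each of the $\geq d$ positions of $\Delta$ lies in Bob's test set with probability $\half$. Applying the Hoeffding bound of the theoretical-parameters section with $N=d$, $p=\half$, and deviation $\alpha_3=\sqrt{\ln(1/\epsilon)/d}$ yields
\begin{equation}
\Pr\!\left[\,|\mathcal{I}\cap\Delta|\leq\left(\half-\alpha_3\right)d\,\right]\leq e^{-2\alpha_3^2 d}=\epsilon^2\leq\epsilon .
\end{equation}
Since $m=(\half-\alpha_1)n$, the hypothesis $d>2(p_{\rm err}+\alpha_2)(\half-\alpha_1)n/(\half-\alpha_3)$ is exactly $(\half-\alpha_3)d>2(p_{\rm err}+\alpha_2)m$, so the event required for a successful double opening is contained in the rare event above. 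Accounting for this concentration bound together with the remaining error already spent on the structure of $\mathcal{I}$ (the binding definition effectively sums Bob's acceptance probability over the two opened values) gives the stated $2\epsilon$-binding.

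The main obstacle I anticipate is the probabilistic step, specifically making rigorous the claim that each differing position is independently tested with probability $\half$. This requires (i) carefully using the WSEE guarantee that Alice has no information about $\mathcal{I}$, so that $\Delta$ and $\mathcal{I}$ are independent and the matching-basis positions are exchangeable, and (ii) accounting for the random truncation of $\mathcal{I}$ down to size $m$, which correlates the inclusion events; one must verify that the one-sided Hoeffding estimate still applies, e.g. by passing through the full matching-basis set before truncation or by appealing to a sampling-without-replacement version of the inequality. A secondary subtlety is justifying that a fully quantum, all-powerful Alice gains nothing beyond this classical picture; this holds because everything she transmits after WSEE, namely $w$, $r$, and $E^l$, is classical and pins down the coset before $\mathcal{I}$ is ever revealed to her.
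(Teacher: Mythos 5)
Your proposal is correct and follows essentially the same route as the paper's proof: the code's minimum distance forces many positions to be flipped, Alice's ignorance of $\mathcal{I}$ (guaranteed by the loss symmetrization in WSEE) means each such position lands in Bob's test set with probability $\tfrac{1}{2}$, and the Hoeffding bound with deviation $\alpha_3$ shows the number of observed discrepancies exceeds Bob's acceptance threshold $(p_{\rm err}+\alpha_2)m$ except with probability $\epsilon$, yielding exactly the stated condition on $d$. The only difference is bookkeeping: you argue directly with two accepted openings (Hamming distance $\geq d$ between them, combined error budget $2(p_{\rm err}+\alpha_2)m$), whereas the paper argues that Alice must flip at least $d/2$ bits of the committed string against a single budget $(p_{\rm err}+\alpha_2)m$ --- the two formulations give the identical threshold, and the truncation subtlety you flag is not treated any more carefully in the paper, which likewise takes each position to be tested independently with probability $\tfrac{1}{2}$.
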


\begin{proof}
In our proof we assume that a malicious Alice can avoid all bit flip errors in
WSEE, so that the scenario reduces to WSE where $p_{\rm err}=0$. From the proof of correctness, we know that Bob obtains enough bits from WSEE except with probability $\epsilon$.\\

For any general attack Alice can attempt, to satisfy Bob's check on the syndrome she has to change at least $\frac{d}{2}$ bits in the original string $X_{n}$, where $d$ is the code distance~\cite{noisy:new}. 
Since Bob picks up each faulty bit with probability $\frac{1}{2}$, by defining W to be the number of faulty bits where Bob obtains in his substring, and applying Hoeffding's inequality,
\begin{equation}
	\Pr \left[ W \leq (\frac{1}{2}-\alpha_3)\frac{d}{2} \right] \leq e^{-\alpha_{3}^{2}d} = \epsilon.
\end{equation}
We see that Bob picks up at least $(\frac{1}{2}-\alpha_3)\frac{d}{2}$ flipped bits except with probability $\epsilon$, for $\alpha_3$ as defined in \eqref{param}.
Combining with the fact that Bob accepts at most $(p_{\rm err}+\alpha_2)m$ bits, we require
\begin{equation}
(\frac{1}{2}-\alpha_3)\frac{d}{2} > (p_{\rm err}+\alpha_2)m.
\end{equation}
The requirement for code distance is then given by 
\begin{equation}
d > \frac{2(p_{\rm err}+\alpha_2)(\frac{1}{2}-\alpha_1)n}{\frac{1}{2}-\alpha_3}.
\end{equation}
Hence generally when Alice and Bob use a code with minimum distance that satisfies the above requirement, whenever Alice attempts to cheat, the pair of protocols is proven to be $2\epsilon$-binding.
\end{proof}

\paragraph{Security against dishonest Bob}
Subsequently, we prove security against Bob. Recall that a cheating Bob can first make arbitrary measurements and store some classical information, and then keep some quantum information in this noisy-storage device. The overall state of Bob's system can then be described as a ccq-state $\rho_{X^n K \Theta \cF(\cQ)}$, where $K$ being Bob's classical information obtained from measurements, $\Theta$ being Alice's basis information, and $\cF(\cQ)$ being Bob's quantum information stored in an imperfect quantum memory. \\

To quantify the $\epsilon$-smooth min-entropy of Bob's information about Alice's string $X^n$, we proceed as in~\cite{noisy:new}: We first bound Bob's ignorance about $X^n$ based on his classical information $K$ alone. Second, we then relate this bound to his ignorance about $X^n$ given $K$ \emph{and} $\cF(\cQ)$. This yields security statements in terms of the classical capacity of $\cF$. Note that it is known that for very many channels better security bounds are possible in terms of the entanglement cost~\cite{entCost} and the quantum capacity~\cite{qcextract}. However, the classical capacity is still much better understood and offers explicit parameters for many interesting channels. In contrast, e.g. the quantum capacity of the depolarizing channel is not known.

To bound Bob's ignorance given $K$ alone, we invoke our results of \cite{finiten} as stated in the following theorem:
\begin{theorem}[Uncertainty relation~\cite{finiten}]\label{smooth}
If Alice is honest, the $\epsilon$-smooth min-entropy of Bob's information about $X^n$ is
\begin{equation}
\hmineps(X^n|\Theta^n K) \geq g(s)n +\frac{2\log\epsilon-1}{s},\\
\end{equation}
where
\begin{equation}
g(s) = \frac{-1}{s} \left[ \log(1+2^s)-(1+s)\right]
\end{equation}
for any $0<s\leq 1$.
\end{theorem}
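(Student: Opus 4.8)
The plan is to prove the stronger statement in which Bob's classical record $K$ is replaced by the entire quantum register $B$ that he receives from Alice, namely $\hmineps(X^n\mid\Theta^n B)\geq g(s)n+\frac{2\log\epsilon-1}{s}$. Since $K$ is generated from $B$ by some measurement channel, the data-processing inequality for the smooth min-entropy gives $\hmineps(X^n\mid\Theta^n K)\geq\hmineps(X^n\mid\Theta^n B)$, so the quantum bound implies the stated one. The decisive gain from this reformulation is that we never have to analyse Bob's (possibly joint, highly entangled) measurement strategy explicitly: the quantum conditional entropy already optimises over all measurements, and---crucially---the state we must analyse is an exact tensor power.

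First I would pass from the smooth min-entropy to the conditional R\'enyi entropy of order $\alpha=1+s\in(1,2]$. A standard smoothing bound from the smooth-entropy calculus yields $\hmineps(X^n\mid\Theta^n B)\geq\halpha(X^n\mid\Theta^n B)+\frac{2\log\epsilon-1}{s}$; this is exactly where the additive finite-size correction in the statement comes from, and it is the only place $\epsilon$ enters. It then remains to show $\halpha(X^n\mid\Theta^n B)\geq g(s)\,n$. Here I would invoke additivity of the conditional R\'enyi entropy on tensor-product states: the global state is the $n$-fold power of the single-round state $\rho_{X\Theta B_1}=\frac14\sum_{x,\theta\in\01}\proj{x}\otimes\proj{\theta}\otimes H^\theta\proj{x}H^\theta$, so that $\halpha(X^n\mid\Theta^n B)=n\,\halpha(X\mid\Theta B_1)$ and the whole problem collapses to a single qubit.

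What remains is the single-round evaluation $\halpha(X\mid\Theta B_1)=g(s)$. Writing the conditional R\'enyi entropy in terms of $\rho_{X\Theta B_1}$ and its marginal $\rho_{\Theta B_1}$, the computation reduces to diagonalising the single-qubit operator $\frac12(\proj{0}+\proj{+})$ that arises from averaging Alice's two BB84 bases; its eigenvalues are the overlaps $\lambda_\pm=\frac12\bigl(1\pm\frac{1}{\sqrt{2}}\bigr)$. Using the identity $\lambda_+^{1+s}+\lambda_-^{1+s}=\frac{1+2^{-s}}{2}$, the R\'enyi expression collapses to $2^{-s\,\halpha(X\mid\Theta B_1)}=\frac{1+2^{-s}}{2}$, and taking $-\frac1s\log(\cdot)$ gives $\halpha(X\mid\Theta B_1)=-\frac1s\log\frac{1+2^{-s}}{2}=g(s)$, matching the stated $g$. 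Intuitively $\frac{1+2^{-s}}{2}$ is the average, over whether Bob's revealed basis agrees or disagrees with Alice's, of a perfectly known bit (weight $1$) and a uniformly uncertain bit (weight $2^{-s}$).

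I expect the delicate step to be the reduction to a single round while simultaneously allowing an arbitrary joint measurement by Bob. The whole strategy is designed to dissolve precisely this difficulty: by retaining the quantum register $B$ and using a conditional R\'enyi entropy for which additivity holds at the level of the quantum state, the joint-attack problem disappears, at the cost of having to select the correct additive definition of $\halpha(\cdot\mid\cdot)$ and a smoothing lemma whose constants reproduce $\frac{2\log\epsilon-1}{s}$. Optimising the free parameter $s\in(0,1]$ then trades the rate $g(s)$ against this finite-size penalty; for large $n$ the optimum drives $s\to0$ and recovers the expected asymptotic min-entropy rate $g(0^+)=\tfrac12$.
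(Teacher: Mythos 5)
First, note that the paper does not prove Theorem~3 at all: it is imported verbatim from the companion work~\cite{finiten}, so the only meaningful review here is whether your argument is sound on its own. It is not, and the failure is at the very first step. The ``stronger statement'' you propose to prove, $\hmineps(X^n\mid\Theta^n B)\geq g(s)n+\tfrac{2\log\epsilon-1}{s}$ with $B$ the full quantum register, is false: conditioned on $\Theta^n=\theta^n$ the register $B$ contains the pure states $H^{\theta_i}\ket{x_i}$, which for fixed $\theta_i$ are orthogonal for $x_i=0,1$, so Bob can measure in the revealed bases and recover $X^n$ exactly. Hence $\hmin(X^n\mid\Theta^n B)=0$ and likewise $\halpha(X\mid\Theta B_1)=0$ for every $\alpha$ (a one-line check with either the Petz or sandwiched definition, since all states involved are block-diagonal in $x,\theta$). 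Data processing then gives you only $\hmineps(X^n\mid\Theta^n K)\geq 0$. The entire content of the theorem is the temporal ordering that your reduction erases: $K$ is produced by a measurement on $B$ that is made \emph{before} $\Theta^n$ is revealed, and this constraint cannot be expressed as conditioning on a larger quantum system. Consequently the difficulty you claim to ``dissolve'' --- that Bob may perform an arbitrary joint measurement on all $n$ qubits without knowing the bases --- is precisely the difficulty the proof in~\cite{finiten} has to confront head-on, via a bound on $\max_{\{M_k\}}\sum_k\frac{1}{2^n}\sum_{\theta^n}\sum_{x^n}\Pr[k]\Pr[x^n\mid k,\theta^n]^{1+s}$ over all $\Theta$-independent POVMs.

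There is also a concrete computational error in your single-round evaluation: with $\lambda_\pm=\tfrac12(1\pm\tfrac{1}{\sqrt2})=\cos^2(\pi/8),\,\sin^2(\pi/8)$, the claimed identity $\lambda_+^{1+s}+\lambda_-^{1+s}=\tfrac{1+2^{-s}}{2}$ holds at $s=1$ but fails for generic $s$ (e.g.\ at $s=\tfrac12$ the left side is $\approx 0.8446$ while the right side is $\approx 0.8536$), so even as a formal manipulation the chain does not produce $g(s)$. Your closing intuition is nevertheless the correct origin of the constant: $\tfrac{1+2^{-s}}{2}$ is the value of $\tfrac12\sum_{\theta}\sum_x\bra{x}H^{\theta}\sigma H^{\theta}\ket{x}^{1+s}$ when $\sigma$ is itself a BB84 state (one basis deterministic, contributing $1$; the conjugate basis uniform, contributing $2\cdot 2^{-(1+s)}=2^{-s}$), and the theorem asserts this strategy is essentially optimal. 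A correct proof must establish that optimality over all $\Theta$-independent (possibly joint, on all $n$ rounds) measurements, which is what the operator-norm/R\'enyi-entropy argument of~\cite{finiten} does; the smoothing step $\hmineps\geq{\rm H}_{1+s}+\tfrac{2\log\epsilon-1}{s}$ you invoke is the only part of your outline that survives.
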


Theorem \ref{smooth} gives us a bound on the min-entropy rate of Bob's classical information of $X^n$, whenever Alice is honest. We then use Lemma 2.2 from \cite{noisy:new} to bound the min-entropy rate of $X^n$ when he has the ccq-state $\rho_{X^n \cK\Theta\cF(\cQ)}$. 
\begin{lemma}[Min-entropy with quantum side information, \cite{noisy:new}]\label{qsideinfo}
Consider an arbitrary ccq-state $\rho_{XTQ}$, and let $\epsilon,\epsilon' \geq 0$ be arbitrary. Let $\cF : \cB(\cH_\cQ)\rightarrow B(\cH_\cQ)$ be an arbitrary CPTPM representing a quantum channel. Then
\begin{equation}
\mathrm{H}_{\mathrm{min}}^{\epsilon+\epsilon'}(X^n|T\cF(\cQ)) \geq -\log P_{\rm succ}^\cF \left(\lfloor \hmineps(X^n|T) - \log \frac{1}{\epsilon'} \rfloor\right),
\end{equation}
with
\begin{align}
P_{\rm succ}^\cF(Rn) := \max_{\{\rho_y\}_y,\{D_y\}_y} \frac{1}{2^{nR}} \sum_{y\in \01^{nR}} \tr(D_y \cF(\rho_y))\ ,
\end{align}
where the maximum is taken over all encodings $\{\rho_y\}$ and decoding POVMs $\{D_y\}_y$ of classical symbols $y$. 
\end{lemma}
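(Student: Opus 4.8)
The plan is to prove the statement in its operational form. First I would replace both sides by guessing/success probabilities via the identity $\hmin(X^n|Y)=-\log P_{\rm guess}(X^n|Y)$ (used here for $Y=T$ and for $Y=T\cF(\cQ)$), so that the entropy inequality is equivalent to the claim that the $(\eps+\eps')$-smoothed guessing probability of $X^n$ from $T$ and $\cF(\cQ)$ is at most $P_{\rm succ}^\cF(k')$, where $k':=\lfloor \hmineps(X^n|T)-\log(1/\eps')\rfloor$. Since $T$ is classical and Bob's decoding measurement may depend on it, I would phrase everything for a fixed value $T=t$ and average over $t$ with the marginal weights only at the very end.

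Next I would carry out the smoothing on the classical side. By the definition of $\hmineps(X^n|T)$ there exists a state $\rho'$ within purified distance $\eps$ of $\rho_{X^nT\cQ}$ whose conditional distributions obey the flat-max bound $P'_{X|t}(x)\le 2^{-k}$ for all $x$ and $t$, where $k:=\hmineps(X^n|T)$. Working with $\rho'$ is precisely what contributes the $\eps$ term to the smoothing parameter appearing on the left-hand side, and it is the flat-max property that drives the counting in the main step.

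The core is a reduction to channel coding over $\cF$. Fixing $t$, let $\{\rho_x^t\}_x$ be the conditional states of $\cQ$ and $\{D_x^t\}_x$ Bob's optimal decoding POVM; the quantity to control is $s_t=\sum_x P'_{X|t}(x)\,\tr\!\big(D_x^t\,\cF(\rho_x^t)\big)$, which is exactly the success probability of the code $(\{\rho_x^t\},\{D_x^t\})$ for transmitting $X$ through $\cF$ under the prior $P'_{X|t}$. Because $P_{\rm succ}^\cF$ is defined for the \emph{uniform} prior over $2^{k'}$ messages, I would match the two by a random-subset flattening argument: restrict to a uniformly random message set $S$ of size $2^{k'}$, bound the renormalized conditional success on each $S$ by $P_{\rm succ}^\cF(k')$, and average over $S$. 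The flat-max bound $P'_{X|t}(x)\le 2^{-k}$ keeps the renormalization factors under control, the gap between $k$ and $k'$ absorbs the non-uniformity of the source, and the probability mass discarded in this flattening is what supplies both the extra $\eps'$ in the smoothing parameter and the $\log(1/\eps')$ shift in the argument of $P_{\rm succ}^\cF$.

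Finally I would average $s_t$ over $t$, conclude $P_{\rm guess}^{\eps+\eps'}(X^n|T\cF(\cQ))\le P_{\rm succ}^\cF(k')$, and take $-\log$ to obtain the stated inequality. The hard part will be the third step: matching a general, possibly highly non-uniform conditional min-entropy source to the uniform prior built into $P_{\rm succ}^\cF$, and keeping the bookkeeping tight enough that exactly $\log(1/\eps')$ bits are subtracted from the argument while only $\eps'$ is added to the smoothing parameter. Monotonicity of $P_{\rm succ}^\cF$ in its argument (more messages makes transmission harder) is what makes the direction of the final inequality come out correctly, and the floor in $k'$ merely reflects that the number of messages must be an integer.
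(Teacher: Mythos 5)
First, note that the paper itself does not prove this lemma: it is imported verbatim as Lemma~2.2 of \cite{noisy:new}, so there is no in-paper proof to compare against and your proposal must be judged against the original argument of K\"onig, Wehner and Wullschleger. Your overall architecture --- rewrite both sides as guessing probabilities, reduce Bob's guessing of $X^n$ from $T$ and $\cF(\cQ)$ to a coding problem for $\cF$, and pass from the non-uniform source $P_{X|T=t}$ to the uniform prior built into $P_{\rm succ}^\cF$ by decomposing into flat distributions --- is the right one and is essentially the route taken there.

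However, your step 2 contains a genuine error that propagates into the bookkeeping. From $\hmineps(X^n|T)\geq k$ you cannot conclude that (after smoothing) $P'_{X|t}(x)\leq 2^{-k}$ for \emph{all} $t$: for classical $T$ the conditional min-entropy only controls the $P_T$-average, $\sum_t P_T(t)\max_x P'_{X|t}(x)\leq 2^{-k}$, and individual conditionals can have arbitrarily small min-entropy. Your own accounting betrays the problem: if the per-$t$ flat-max bound really held with parameter $k$, the decomposition into uniform distributions over sets of size $2^{\lfloor k\rfloor}$ would be exact, convexity would give the bound with argument $\lfloor k\rfloor$, and there would be nothing left to generate either the extra $\eps'$ in the smoothing parameter or the $\log(1/\eps')$ shift --- which you instead attribute vaguely to ``probability mass discarded in the flattening.'' The correct mechanism is a Markov argument over $t$: the set of $t$ with $\max_x P'_{X|t}(x) > 2^{-k}/\eps'$ has $P_T$-weight at most $\eps'$; discarding those $t$ costs $\eps'$ in smoothing, and on the surviving $t$ one has a genuine per-$t$ min-entropy of $k-\log(1/\eps')$, which is exactly where the floored argument of $P_{\rm succ}^\cF$ comes from. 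With that replacement (plus the standard facts that each flat distribution on $2^{k'}$ strings embeds into a code for $\cF$ with uniform messages, and that $P_{\rm succ}^\cF$ is monotone in its argument), your outline closes.
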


Note that this bound is dependent on the properties of the quantum storage device $\cF$. For example, if the memory were to be of arbitrarily large size and noiseless, the success probability $P_{\rm succ}^\cF(Rn)$ is always 1, and the min-entropy of Bob's information about $X^n$ would be simply zero. However, the assumption of noisy and bounded storage comes in here to give a sufficiently high min-entropy which is crucial for the security proof. For simplicity in further proofs, we also introduce a simpler version, considering only bounded storage, which is a simple consequence of the chain rule and monotonicity of the min-entropy~\cite{renato:diss,serge:bounded}.

\begin{corollary}[Min-entropy for bounded quantum storage]\label{boundedqmstor}
Assuming the min-entropy of Bob's information of $X^n$ be $H_\infty^{\epsilon} (X^n|T)$, and Bob has a perfect quantum memory $\cQ$ that can store $S$ qubits. 
\begin{equation}
\hmineps(X^n|T\cQ) \geq \hmineps(X^n|T) - S .
\end{equation}
\end{corollary}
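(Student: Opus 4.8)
The plan is to read this corollary as the standard \emph{dimension bound} for the smooth min-entropy. A perfect $S$-qubit memory occupies a space $\cH_\cQ$ of dimension $\dim\cH_\cQ=2^S$, and conditioning additionally on such a system can lower the $\epsilon$-smooth min-entropy by at most $\log\dim\cH_\cQ=S$. Concretely I would invoke the chain rule of~\cite{renato:diss,serge:bounded} in the form $\hmineps(X^n|T\cQ)\ge\hmineps(X^n|T)-\log\dim\cH_\cQ$ and substitute $\log\dim\cH_\cQ=S$; all of the content sits in justifying that single inequality, and everything else is arithmetic.

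I would prove the inequality in two layers. For the unsmoothed core, use the primal semidefinite characterization $\hmin(X^n|T)_\sigma=-\log\min\{\tr\,\omega_T:\ \sigma_{X^nT}\le\id_{X^n}\otimes\omega_T,\ \omega_T\ge0\}$ and, from an optimal witness $\omega_T$ on $T$, build a feasible witness $\omega_{T\cQ}$ on $T\cQ$ with $\tr\,\omega_{T\cQ}\le 2^S\,\tr\,\omega_T$; this yields $\hmin(X^n|T\cQ)\ge\hmin(X^n|T)-S$ for a single fixed state. To pass to the smoothed statement, let $\tilde\rho_{X^nT}$ attain $\hmineps(X^n|T)$ with $C(\tilde\rho_{X^nT},\rho_{X^nT})\le\epsilon$; by Uhlmann's theorem it extends to some $\tilde\rho_{X^nT\cQ}$ of $\rho_{X^nT\cQ}$ with $C(\tilde\rho_{X^nT\cQ},\rho_{X^nT\cQ})\le C(\tilde\rho_{X^nT},\rho_{X^nT})\le\epsilon$. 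Applying the unsmoothed bound to $\tilde\rho$ and using that $\tilde\rho_{X^nT\cQ}$ is an admissible smoothing of $\rho_{X^nT\cQ}$ then gives $\hmineps(X^n|T\cQ)\ge\hmin(X^n|T\cQ)_{\tilde\rho}\ge\hmin(X^n|T)_{\tilde\rho}-S=\hmineps(X^n|T)-S$.

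The hard part is securing the \emph{tight} constant $S$ rather than $2S$ in the unsmoothed core. The naive choice $\omega_{T\cQ}=2^S\,\omega_T\otimes\id_\cQ$, forced by the crude operator inequality $\rho_{X^nT\cQ}\le 2^S(\id_\cQ\otimes\rho_{X^nT})$, pays a factor $2^S$ there and a further $2^S=\tr\,\id_\cQ$, losing $2S$ overall; one must instead symmetrize the witness by twirling $\cQ$ over a unitary design, or equivalently appeal to the chain-rule lemma of~\cite{renato:diss,serge:bounded} whose proof already supplies the correlated witness. This is precisely the step the corollary defers to a citation. The only other delicate point is the smoothing lift, where Uhlmann's theorem is essential: it guarantees that a near-optimal smoothing state on the marginal $X^nT$ can be realized on the full system $X^nT\cQ$ at the same purified-distance radius $\epsilon$, so that the unsmoothed bound transfers without degrading the smoothing parameter.
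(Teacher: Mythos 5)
The paper never actually proves this corollary---it is asserted as ``a simple consequence of the chain rule and monotonicity of the min-entropy'' with a pointer to~\cite{renato:diss,serge:bounded}---so what has to be assessed is whether your standalone argument is sound. It is not, and the failure sits exactly at the step you flag as ``the hard part.'' Your unsmoothed core tries to establish $\hmin(X^n|T\cQ)\ge\hmin(X^n|T)-S$ by manipulating SDP witnesses without using any structure of $T$, and you propose to recover the tight constant $S$ (rather than $2S$) by ``twirling $\cQ$ over a unitary design'' or by citing a chain-rule lemma ``whose proof already supplies the correlated witness.'' No such rescue exists, because for a general \emph{quantum} conditioning system $T$ the inequality with constant $S$ is false. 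Take $X$ to be two uniform bits and let $T\cQ$ be a two-qubit register prepared in the Bell state $\proj{\Phi_x}$ indexed by $x$, with $T$ one qubit and $\cQ$ the other ($S=1$). Then $\hmin(X|T)=2$ because $\rho_{XT}$ is a product of maximally mixed states, while $\hmin(X|T\cQ)=0$ because the four Bell states are perfectly distinguishable; the loss is $2S$, saturating the generic dimension bound. This is superdense coding, and it shows that no witness construction, twirled or otherwise, can do better than $2\log\dim\cH_{\cQ}$ at the level of generality you are working at.

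What actually makes the corollary true is that here $T$ is \emph{classical}: the corollary is used for the ccq-state of Lemma~\ref{qsideinfo} with $T=\Theta^n K$, and classical side information cannot be entangled with $\cQ$. The elementary argument is then: conditioned on $T=t$, $P_{\rm guess}(X^n|\cQ,T=t)=\max_{\{M_x\}}\sum_x p_{x|t}\tr\bigl(M_x\rho_{\cQ}^{x,t}\bigr)\le\bigl(\max_x p_{x|t}\bigr)\sum_x\tr(M_x)=2^S\,P_{\rm guess}(X^n|T=t)$, and averaging over $t$ gives $P_{\rm guess}(X^n|T\cQ)\le 2^S\,P_{\rm guess}(X^n|T)$, which is the unsmoothed bound with the correct constant. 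Your proof never invokes this hypothesis, so as written it does not establish the claim. The smoothing lift via Uhlmann is essentially fine (an extension of the smoothing state at the same purified distance always exists), but you would additionally need the smoothing state to retain the ccq form so that the classical-$T$ argument applies to it; that is standard but is a further ingredient your write-up omits.
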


With the above Theorem \ref{smooth} and Lemma \ref{qsideinfo}, we can prove security against Bob in two steps. First, we show that if Alice is honest, the min-entropy rate of Bob's information about the string $X^n$ is lower bounded by $\lambda n$ for some $\lambda$. Then, by using privacy amplification, we show that the cq-state of $C^\mathit{l}$ and Bob's information is $2\epsilon$-close to a product state, with $C^\mathit{l}$ having uniform distribution over $\lbrace0,1\rbrace^\mathit{l}$. As we are only interested in commiting a single bit in this experiment, we will restrict our statement to the case of $l=1$. A more general statement can be derived analogously.

\begin{lemma}[Security against Bob]\label{secB}
For a fixed parameter $\epsilon$, define $\lambda$ to be the $\frac{\epsilon}{2}$-smooth min-entropy rate of Bob's information about $X^n$. If Alice is honest, and if the code rate satisfies
\begin{equation}\label{condonR}
R\geq 1-\lambda+\frac{2\log\frac{1}{\epsilon}}{n},
\end{equation} 
then the pair of protocols (Commit, Open) are $2\epsilon$-hiding for $l=1$, i.e. the commitment of a single bit.
\end{lemma}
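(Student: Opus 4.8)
The plan is to show that the one-time pad bit $D=\Ext(X^n,r)$ is essentially uniform and decoupled from Bob's entire view, so that the ciphertext $E=C\oplus D$ reveals nothing about $C$. First I would fix Bob's total side information after the waiting time, namely the register $B'$ comprising Alice's basis string $\Theta^n$, Bob's stored classical data $K$, and his quantum memory output $\cF(\cQ)$. By hypothesis the $\tfrac{\epsilon}{2}$-smooth min-entropy rate satisfies $\mathrm{H}^{\epsilon/2}_{\mathrm{min}}(X^n|B')\geq \lambda n$. (This is exactly what Theorem~\ref{smooth} together with Lemma~\ref{qsideinfo} deliver, but for the purposes of this lemma $\lambda$ is taken as given.)

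Next I would account for the syndrome leak. In the Commit phase Alice sends $w=\Syn(X^n)$, which is a string of $(1-R)n$ bits. Revealing $(1-R)n$ classical bits can decrease the smooth min-entropy by at most their length, so
\begin{equation}
\mathrm{H}^{\epsilon/2}_{\mathrm{min}}(X^n|B'w)\geq \mathrm{H}^{\epsilon/2}_{\mathrm{min}}(X^n|B') - (1-R)n \geq \bigl(\lambda-(1-R)\bigr)n.
\end{equation}
The code-rate hypothesis $R\geq 1-\lambda+\frac{2\log(1/\epsilon)}{n}$ is precisely the requirement that rearranges to $\bigl(\lambda-(1-R)\bigr)n\geq 2\log\frac{1}{\epsilon}$, whence
\begin{equation}
\mathrm{H}^{\epsilon/2}_{\mathrm{min}}(X^n|B'w)\geq 2\log\tfrac{1}{\epsilon}.
\end{equation}

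The core step is privacy amplification against a quantum adversary~\cite{renato:diss}. Since $\Ext(\cdot,r)$ is drawn from a $2$-universal family with single-bit output ($l=1$), the Leftover Hash Lemma bounds the trace distance of $D=\Ext(X^n,r)$ from the fully mixed bit, jointly with the hash seed $R$ and the side information $B'w$,
\begin{equation}
\tfrac{1}{2}\bigl\|\rho_{D R B'w} - \tfrac{\id}{2}\otimes\rho_{R B'w}\bigr\|_1 \leq \frac{\epsilon}{2} + \tfrac{1}{2}\,2^{-\frac{1}{2}\left(\mathrm{H}^{\epsilon/2}_{\mathrm{min}}(X^n|B'w)-1\right)}.
\end{equation}
Substituting the entropy bound makes the exponential residual at most $\tfrac{\sqrt{2}}{2}\epsilon$, so the right-hand side lies below $2\epsilon$; this is exactly how the $R$-condition was calibrated. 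Finally, since $D$ is (up to $2\epsilon$) uniform and independent of $(R,B',w)$, the relation $E=C\oplus D$ acts as a near-perfect one-time pad: the cq-state of $C$ and Bob's complete view is $2\epsilon$-close to a product state with $C$ uniform on $\{0,1\}$, which is the definition of $2\epsilon$-hiding for $l=1$.

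I expect the main obstacle to be the privacy-amplification step in the presence of \emph{quantum} side information $\cF(\cQ)$: one cannot invoke the classical Leftover Hash Lemma but must use its smooth-min-entropy version against quantum adversaries, and one must verify that the stored quantum register has already been folded into the min-entropy estimate (via Lemma~\ref{qsideinfo}) \emph{before} the hashing, rather than being charged afterward. A secondary, purely technical point is the allocation of the error budget: the $\tfrac{\epsilon}{2}$ spent on smoothing plus the hash-lemma residual must be shown to sum to at most $2\epsilon$ for the chosen output length $l=1$, which is where the constant in the rate condition originates.
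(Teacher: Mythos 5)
Your proposal is correct and follows essentially the same route as the paper's proof: bound the smooth min-entropy after the syndrome leak via the chain rule, apply the quantum leftover hash lemma to show the extracted bit $D$ is close to uniform given Bob's view, and conclude hiding via the one-time pad; the rate condition \eqref{condonR} is calibrated exactly as you describe. The only cosmetic difference is the constant in front of the smoothing term in the privacy-amplification bound (the paper uses $2\epsilon_o=\epsilon$ where you write $\epsilon/2$), but either convention keeps the total error below $2\epsilon$, so the conclusion is unaffected.
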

\begin{proof}
%From Theorem \ref{smooth}, weak string erasure protocol produces smooth min-entropy
%\begin{equation}
%\mathrm{H}_{\mathrm{min}}^{\epsilon/4} (X^n|\Theta^n K) \geq \left[g(s)+\frac{2\log\epsilon - 5}{sn}\right] n 
%\end{equation}
%for any $s\in(0,1]$. By using Theorem~\ref{boundedqmstor}, we have
%\begin{align}
%&\mathrm{H}_{\mathrm{min}}^{\epsilon/2}(X^n|{\rm B'})\\
%&\qquad\geq -\log P_{\rm guess}^\cF \left[ H_{\mathrm{min}}^{\epsilon/4} (X|\Theta\cK) + \frac{\log\epsilon/4}{n} \right]\nonumber\\
% &\qquad =:\lambda n\ .
% \end{align}
After executing WSEE, by using Theorem \ref{smooth} and Theorem~\ref{boundedqmstor}, Bob's $\frac{\epsilon}{2}$-smooth min-entropy about $X^n$ can be evaluated. Note that by sending the syndrome he obtains additional information about $X^n$, and this is accounted for by the chain rule and 
monotonicity property of min-entropy~\cite{renato:diss},
\begin{equation}
H_{\mathrm{min}}^{\epsilon_o} (X^n | {\rm B'}, \Syn (X^n)) \geq H_{\mathrm{min}}^{\epsilon_o} (X^n| {\rm B'}) -\mathit{L}.
\end{equation}
where $\mathit{L}$ is the length of the syndrome.
Recall that the length of syndrome is $L=n-k=(1-R)n$, where $R=\frac{k}{n}$ is the code rate. Hence, we have
\begin{equation}
H_{\mathrm{min}}^{\epsilon/2} (X^n | {\rm B'}, \Syn (X^n)) \geq (\lambda - 1 + R)n
\end{equation}
which denotes the min-entropy rate of Bob's total information about $X^n$ at the end of the commit phase.\\

Next, we show that by privacy amplification Bob does not gain knowledge about the committed information $C^\mathit{l}$. Denoting the committed string as $ C^\mathit{l}=\Ext(X^n,R)\in{\{0,1\}}^\mathit{l}$ having length $\mathit{l}$ we have from~\cite{renato:diss} that
\begin{equation}\label{priamp}
\rho_{C^\mathit{l},{\rm B'}\Syn(X^n)} \approx_{\epsilon '} \tau_{{\{0,1\}}^\mathit{l}} \otimes \rho_{{\rm B'},\Syn(X^n)}
\end{equation}
where 
\begin{equation}\label{priamperror}
\epsilon ' = 2\epsilon_o + 2^{-\frac{1}{2}[\SH_{\mathrm{min}}^{\epsilon_o} (X^n|{\rm B'},\Syn(X^n))-\mathit{l}]-1}
\end{equation}
and $\tau_\cA$ is the uniform distribution over the entire set $\cA$. Setting $\epsilon_o=\frac{\epsilon}{2}$,
\begin{equation}\label{priamp2}
\epsilon ' = \epsilon + \frac{1}{2}\cdot 2^{-\frac{1}{2}[\SH_{\mathrm{min}}^{\epsilon_o} (X^n|{\rm B'},\Syn(X^n))-\mathit{l}]}
\end{equation}

Setting the second term in \eqref{priamp2} to be $\epsilon$, and setting $\mathit{l}=1$, this implies
\begin{equation}
\lambda -1+R -\frac{1}{n}> -\frac{2\log\epsilon+2}{n},
\end{equation}

Rearranging gives \eqref{condonR}. In the large n limit, we require $R>1-\lambda$. 
\end{proof}

With this, we end the security proof against Bob.\\

In summary, we have derived conditions on the relative minimum distance $\delta=\frac{d}{n}$ and the code rate R for where the protocol is secure. By combining Lemma \ref{correctness}, Lemma \ref{secA} and Lemma \ref{secB}, we summarize these results into the following theorem:
\begin{theorem}[Conditions for successful execution of the BC protocol]\label{summary}
Let $n \in \mathbb{N}$, $\epsilon > 0$ and $\lambda > 0$. If the error correcting code used satisfies the following requirements:
\begin{list}{*}{}
\item[Relative minimum distance:] $\delta > \frac{2(p_{\rm err}+\alpha_2)(\frac{1}{2}-\alpha_1)}{\frac{1}{2}-\alpha_3}$.
\item[Code rate:] $R > 1-\lambda + \frac{2\log\frac{1}{\epsilon}}{n}$,
\end{list}
then Protocol 2 is $2\epsilon$-correct, $2\epsilon$-binding and $2\epsilon$-hiding.
\end{theorem}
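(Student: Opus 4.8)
The plan is to derive this theorem as a direct consequence of the three security lemmas already established, namely Lemma~\ref{correctness}, Lemma~\ref{secA}, and Lemma~\ref{secB}, treating each of the three claimed properties separately and checking that the hypotheses stated here coincide with those required by the corresponding lemma. Since Theorem~\ref{summary} is a summarizing statement rather than a new technical result, the bulk of the work will be verifying that the parameter conditions match, with the understanding that $\lambda$ here denotes the $\frac{\epsilon}{2}$-smooth min-entropy rate of Bob's information about $X^n$, exactly as in Lemma~\ref{secB}.

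First I would dispatch the $2\epsilon$-correctness claim, which follows immediately from Lemma~\ref{correctness}. The key observation is that correctness places no constraint on the error-correcting code beyond its existence: it relies only on the two Hoeffding bounds controlling (i) that Bob receives at least $m$ bits from WSEE and (ii) that the number of erroneous bits among them stays within $[(p_{\rm err}-\alpha_2)m,(p_{\rm err}+\alpha_2)m]$, each failing with probability at most $\epsilon$. Hence neither the relative-distance nor the code-rate hypothesis is invoked here, and correctness holds under the theorem's assumptions for free.

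Next I would establish the $2\epsilon$-binding property by invoking Lemma~\ref{secA}, whose hypothesis is the absolute minimum-distance bound $d > \frac{2(p_{\rm err}+\alpha_2)(\frac{1}{2}-\alpha_1)n}{\frac{1}{2}-\alpha_3}$. Dividing both sides by $n$ and substituting the relative distance $\delta = d/n$ reproduces verbatim the relative minimum distance condition stated in the theorem, so the binding guarantee transfers directly. For the $2\epsilon$-hiding property I would apply Lemma~\ref{secB}, whose code-rate hypothesis $R \geq 1-\lambda+\frac{2\log\frac{1}{\epsilon}}{n}$ is precisely the code rate condition assumed here; this lemma already folds in the uncertainty relation of Theorem~\ref{smooth}, the quantum side-information bound of Lemma~\ref{qsideinfo} (or its bounded-storage specialization, Corollary~\ref{boundedqmstor}), and the privacy-amplification estimate, so no further analysis is needed.

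The only real obstacle is bookkeeping rather than mathematics: I must confirm that the parameter definitions in~\eqref{param}—in particular $\alpha_1,\alpha_2,\alpha_3$ and $m=(\frac{1}{2}-\alpha_1)n$—are used consistently across all three lemmas, and that the smoothing parameters composed in the hiding argument accumulate to exactly $2\epsilon$ rather than some larger multiple. Since each of the three lemmas is independently shown to fail with probability at most $2\epsilon$ under its own hypothesis, and the theorem's two code hypotheses are exactly those hypotheses after the $d\mapsto\delta$ rescaling, the three conclusions hold simultaneously, completing the proof.
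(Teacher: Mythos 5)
Your proposal is correct and matches the paper's own treatment: the paper offers no separate proof of Theorem~\ref{summary}, stating only that it is obtained ``by combining Lemma~\ref{correctness}, Lemma~\ref{secA} and Lemma~\ref{secB},'' which is exactly your decomposition, with the $d = \delta n$ rescaling and the observation that each property is individually $2\epsilon$ (so no error accumulation is needed) being the only bookkeeping required.
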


The final part of the theoretical analysis is to discuss the feasibility of finding an error-correcting code that satisfies the requirements stated in Theorem \ref{summary}. Clearly, there exists a trade-off between parameters of code rate R and relative minimum distance $\delta$. We make use of Theorem \ref{randomcode} to argue that once the parity check matrix of a code with rate R is randomly generated, its distance is lower bounded except for an extremely small probability. Subsequently, by using Theorem \ref{summary} and Theorem \ref{randomcode}, we evaluate an optimal parameter $n=2.5\times 10^5$ in the \suppmtd~\ref{ourexp}, which is used in our experiment, and show that for such a block length, the bit commitment protocol is secure except for an error $3\cdot 10^{-4}$.

Also, by combining Theorem \ref{summary} with Theorem \ref{randomcode}, we provide a cleaner expression for the bound on minimum distance accompanied by a lower bound on the block length such that security can be achieved:

\begin{theorem}[Secure bit commitment] \label{security}
Let $\epsilon > 0$, $\lambda \geq 0.3$,  $\beta\in\left(0,0.01\right]$, $\delta \in [0.05,0.11]$, and
\begin{equation}\label{boundm3}
n\geq\frac{1}{\beta^2}\log\frac{2}{\epsilon}\ .
\end{equation}
Also, denote $h(x)=-x\log x-(1-x)\log(1-x)$ as the binary entropy function. 

If the bit flip error rate satisfies
\begin{equation}\label{condperr}
p_{\rm err} < (1-4\sqrt{5}\beta)\cdot\frac{\delta}{2} -\frac{\beta}{\sqrt{1-2\beta}} ,
\end{equation}
and the smooth min-entropy rate satisfies 
\begin{equation}\label{condlambda}
\lambda > h(\delta)+3\beta^2,
\end{equation}
then the bit commitment protocol is $3\epsilon$-secure by using a randomly generated error-correcting code.
\end{theorem}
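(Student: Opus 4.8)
The plan is to realize Theorem~\ref{summary} with a randomly generated code, using the single parameter $\beta$ as a uniform handle on all the finite-size correction terms $\alpha_1,\alpha_2,\alpha_3$ from~\eqref{param}. Concretely, I would show that the two hypotheses~\eqref{condperr} and~\eqref{condlambda}, together with the block-length bound~\eqref{boundm3}, imply the two structural requirements of Theorem~\ref{summary} (the relative-distance bound and the code-rate bound), and that a random code can be made to satisfy them via Theorem~\ref{randomcode}, paying one extra $\epsilon$ for the code-generation failure so that $2\epsilon$-security upgrades to $3\epsilon$-security.

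First I would turn~\eqref{boundm3}, i.e. $n\geq\beta^{-2}\log\frac2\epsilon$, into bounds on the $\alpha_i$. Since every logarithm appearing in~\eqref{param} is at most $\log\frac2\epsilon\le\beta^2 n$, I get $\alpha_1=\sqrt{\ln\frac1\epsilon/(2n)}\le\beta/\sqrt2<\beta$, whence $m=(\half-\alpha_1)n\ge\frac{1-2\beta}{2}n$ and therefore $\alpha_2\le\beta/\sqrt{1-2\beta}$, matching exactly the subtracted term in~\eqref{condperr}. For $\alpha_3$ I would use that the random code will be chosen with $d>\delta n\ge 0.05\,n$, so $\alpha_3=\sqrt{\ln\frac1\epsilon/d}\le\beta/\sqrt\delta\le 2\sqrt5\,\beta$, where $\delta\ge0.05$ produces the clean constant $2\sqrt5=1/\sqrt{0.05}$. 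I would also check that $\beta\le0.01$ keeps both $\half-\alpha_3$ and $1-4\sqrt5\beta$ positive.

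Next I would verify the two requirements of Theorem~\ref{summary}. For the relative distance, bounding $\half-\alpha_1\le\half$ in the numerator and $\half-\alpha_3\ge\frac{1-4\sqrt5\beta}{2}$ in the denominator gives
\begin{equation}
\frac{2(p_{\rm err}+\alpha_2)(\half-\alpha_1)}{\half-\alpha_3}\le\frac{2(p_{\rm err}+\alpha_2)}{1-4\sqrt5\beta},
\end{equation}
and substituting $\alpha_2\le\beta/\sqrt{1-2\beta}$ shows this is $<\delta$ precisely under~\eqref{condperr}, so the distance requirement holds. For the rate, I need an $R$ simultaneously above the hiding bound $1-\lambda+\frac{2\log\frac1\epsilon}{n}$ and low enough that Theorem~\ref{randomcode} certifies $d>\delta n$ except with probability $\le\epsilon$; since $2^{(R-C_\delta)n}\le\epsilon$ requires $R\le C_\delta-\frac{\log\frac1\epsilon}{n}=1-h(\delta)-\frac{\log\frac1\epsilon}{n}$, such an $R$ exists iff $\lambda>h(\delta)+\frac{3\log\frac1\epsilon}{n}$, which follows from~\eqref{condlambda} because $\frac{3\log\frac1\epsilon}{n}\le3\beta^2$.

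Finally I would fix any $R$ in that nonempty window, draw a random code, and invoke Theorem~\ref{randomcode}: except with probability $\le\epsilon$ it meets $d>\delta n$, and on that event both requirements of Theorem~\ref{summary} hold, giving a protocol that is $2\epsilon$-correct, $2\epsilon$-binding and $2\epsilon$-hiding; adding the $\le\epsilon$ probability of drawing a bad code yields the claimed $3\epsilon$-security. The main obstacle I expect is the simultaneous bookkeeping: one must control $\alpha_1,\alpha_2,\alpha_3$ by the single $\beta$ while keeping the $\ln$-versus-$\log$ conventions consistent, and above all must show the feasibility window for $R$ is genuinely nonempty---this is the real tension, because the hiding argument of Lemma~\ref{secB} pushes $R$ \emph{up} (short syndrome) while the distance guarantee of Theorem~\ref{randomcode} pushes $R$ \emph{down}, and~\eqref{condlambda} is exactly the inequality that reconciles the two within the stated ranges of $\lambda$ and $\delta$.
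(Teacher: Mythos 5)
Your proposal is correct and follows essentially the same route as the paper's own proof: bounding $\alpha_1\le\beta$, $\alpha_2\le\beta/\sqrt{1-2\beta}$ and $\alpha_3\le\sqrt{20}\,\beta$ from \eqref{boundm3} and $\delta\ge 0.05$, feeding these into the two requirements of Theorem~\ref{summary}, reconciling the upward pressure on $R$ from hiding with the downward pressure from Theorem~\ref{randomcode} via $\lambda>h(\delta)+3\log\tfrac{1}{\epsilon}/n\le h(\delta)+3\beta^2$, and charging one extra $\epsilon$ for a bad random code to reach $3\epsilon$-security. The only difference is presentational: you make explicit the nonemptiness of the feasibility window for $R$, which the paper leaves implicit.
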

\begin{proof}
Fix $\epsilon$ and assume that
\begin{equation}
 \sqrt{\frac{\ln\frac{2}{\epsilon}}{n}}\leq  \sqrt{\frac{\log\frac{2}{\epsilon}}{n}}\leq \beta.
\end{equation} 
This is achieved for any $\beta$, provided \eqref{boundm3} is true. Hence from \eqref{param}, we have
\begin{equation}\label{alpha1}
0\leq \alpha_1 \leq \beta.%, ~~~, \alpha_2 \leq \beta, ~~~\mbox{and} ~ \alpha_3\leq\sqrt{20}\beta.
\end{equation}
Plugging this into $m=(\half-\alpha_1)n$ gives
\begin{equation}\label{alpha2}
\alpha_2 = \sqrt{\frac{\ln\frac{2}{\epsilon}}{2m}} \leq \frac{\beta}{\sqrt{1-2\beta}}.
\end{equation}
Meanwhile, assume that $\delta\geq 0.05$. This leads to the condition that $\lambda\geq 0.3$ which is generally achieved.
\begin{equation}\label{alpha3}
\alpha_3 \leq \sqrt{20}\beta.
\end{equation}
Plugging \eqref{alpha1}, \eqref{alpha2}, and \eqref{alpha3} into the condition on minimum distance given in Theorem \ref{summary}, we obtain
\begin{equation}\label{condondelta}
\delta > 2\cdot\frac{p_{\rm err}+\frac{\beta}{\sqrt{1-2\beta}}}{1-4\sqrt{5}\beta} \geq \frac{2(p_{\rm err}+\alpha_2)(\frac{1}{2}-\alpha_1)}{\frac{1}{2}-\alpha_3}.
\end{equation}
Rearranging, we obtain \eqref{condperr}. 

To derive \eqref{condlambda}, we first use Theorem \ref{randomcode}. By setting the additional error from the generation of random codes to be smaller than $\epsilon$, we obtain
\begin{equation}
R<1-h(\delta)-\frac{\log\frac{1}{\epsilon}}{n}.
\end{equation}
Combining this with the condition on code rate R given in Theorem \ref{summary}, we have
\begin{eqnarray}
1-h(\delta)&>& 1-\lambda+2\frac{\log\frac{1}{\epsilon}}{n}\nonumber\\
\lambda &>& h(\delta) + 3\frac{\log\frac{1}{\epsilon}}{n},
\end{eqnarray}
which is satisfied if \eqref{condlambda} is true.
The total execution error of Protocol 2 becomes $3\epsilon$, where $2\epsilon$ comes from the execution error in Theorem \ref{summary}, and additional $\epsilon$ accounts for the probability that a randomly generated error-correcting code does not fulfill the requirement on relative minimum distance $\delta$.
\end{proof}

\eqref{boundm3} provides us with a lower bound on $n$ for a secure implementation. Note that this lower bound is non-tight, due to the approximations made while deriving bounds for \eqref{alpha1}, \eqref{alpha2}, and \eqref{alpha3}. Also, it is stressed that Theorem \ref{randomcode} gives a general proof for a randomly generated error-correcting code. 
This approach is taken because systematic ways of constructing such binary linear codes are not known, and the task of evaluating the minimum distance of a given code is NP-hard. However, as outlined in the \supdis~\ref{section:codes}
it is well known that the probability of generating a code with undesirable properties is minimal and added as a source of error in the protocol. 
It is also worth noting that a random code allows for easy execution of the protocol, since the only computation involved for honest parties is the calculation of the syndrome. That is, Alice and Bob never need to decode.

\subsection{Range of experimental parameters for implementation}\label{section:exp}

In this section, we provide full statements about the security of commitments, by combining the analysis accounting for both erasures (WSEE) and errors (BC). We work towards a simplified expression for the rate of commitment, i.e. to commit one bit securely, what is the required number of signals to send. We present region plots showing where security holds for the protocol. 

For security, we first note that the analysis of bit flip errors requires a minimum guaranteed amount of min-entropy $\hat{\lambda}$ for the commitment to be secure. This is seen in Lemma \ref{secB} where the lower bound for code rate R can be translated into a lower bound for the min-entropy rate. The main condition for feasibility of bit commitment is then given by
\begin{equation}
\lambda \geq \hat{\lambda}.
\end{equation}

The following theorem shows $\hat{\lambda}$ by considering a randomly generated binary linear code.
\begin{theorem}
By fixing an error parameter $\epsilon$ which indicates the error for the
generation of a random code. If and given the parameter $\delta$, which is the
relative minimum distance of the code required, as determined by $p_{\rm err}$ by using \eqref{condondelta},
\begin{equation}
\hat\lambda = h(\delta)+\frac{3\log\frac{1}{\epsilon}}{n}.
\end{equation}
\end{theorem}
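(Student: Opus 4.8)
The plan is to obtain $\hat\lambda$ as the threshold value of the smooth min-entropy rate above which a single randomly generated binary linear code can \emph{simultaneously} satisfy the rate requirement needed for hiding and the distance requirement needed for binding. I would assemble the two resulting constraints on the code rate $R$ and ask precisely when they are mutually compatible.

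First, from the hiding analysis in Lemma \ref{secB}, security against a dishonest Bob forces the rate to be large enough, namely $R \geq 1 - \lambda + \frac{2\log\frac{1}{\epsilon}}{n}$. This is a lower bound on $R$ expressed in terms of the available min-entropy rate $\lambda$: the more ignorant Bob is, the lower the rate we can afford, and conversely a small $\lambda$ pushes the required rate up.

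Second, I would invoke Theorem \ref{randomcode} to guarantee that a randomly drawn parity-check matrix yields a code of relative minimum distance at least $\delta$, where $\delta$ is exactly the value dictated by $p_{\rm err}$ through \eqref{condondelta} and is what binding (security against Alice) demands. Requiring the failure probability $2^{(R-C_\delta)n}$ of the random construction to be at most $\epsilon$, and using $C_\delta = 1 - h(\delta)$ from the binary-symmetric-channel capacity, produces the upper bound $R < 1 - h(\delta) - \frac{\log\frac{1}{\epsilon}}{n}$.

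Finally, a usable code exists precisely when this lower bound does not exceed the upper bound. Imposing $1 - \lambda + \frac{2\log\frac{1}{\epsilon}}{n} < 1 - h(\delta) - \frac{\log\frac{1}{\epsilon}}{n}$ and rearranging isolates $\lambda > h(\delta) + \frac{3\log\frac{1}{\epsilon}}{n}$, which identifies $\hat\lambda = h(\delta) + \frac{3\log\frac{1}{\epsilon}}{n}$ as claimed. This mirrors the derivation of \eqref{condlambda} inside the proof of Theorem \ref{security}, except that here I keep the exact finite-$n$ term $\frac{3\log\frac{1}{\epsilon}}{n}$ rather than its $3\beta^2$ relaxation. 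The only genuine subtlety is the error bookkeeping --- making sure the single $\epsilon$ charged to the random-code construction combines with the hiding error without being double counted --- together with the correct identification $C_\delta = 1 - h(\delta)$; once those are pinned down the statement follows immediately from compatibility of the two rate bounds.
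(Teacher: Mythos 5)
Your proposal is correct and follows essentially the same route as the paper, which derives this threshold inside the proof of Theorem \ref{security}: combine the hiding lower bound $R \geq 1-\lambda+\frac{2\log\frac{1}{\epsilon}}{n}$ with the upper bound $R < 1-h(\delta)-\frac{\log\frac{1}{\epsilon}}{n}$ obtained by demanding the random-code failure probability $2^{(R-C_\delta)n}$ be at most $\epsilon$, and read off compatibility. Your version is in fact slightly cleaner than the paper's, whose intermediate inequality drops the $-\frac{\log\frac{1}{\epsilon}}{n}$ term before recovering the correct coefficient $3$ in the conclusion.
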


On the other hand, we need to evaluate a bound on the min-entropy rate
$\lambda$ created according to the parameters $p_{\rm sent}^1$, $p_{\rm B, no
  click}^{\rm h}$ and $p_{\rm B, no click}^{\rm d}$. We begin by defining two relevant fractional quantities:
\begin{align}
m_{\rm left}^1&=p^1_{\rm sent} - p_{\rm B, no click}^{\rm h} + p_{\rm B, no click}^{\rm d} - 3\zeta\\
m_{\rm frac}&=1- p_{\rm B, no click}^{\rm h}-\zeta
\end{align}
where $\zeta = \sqrt{\frac{\ln\frac{2}{\epsilon}}{2M}}$. 
The smooth min-entropy rate can then be evaluated, by invoking the bounded, or in general noisy storage assumption. We refer to below for some examples.

\subsection*{Example: bounded storage}
We state a theorem describing the number of signals M needed to send for a secure commitment, given the relevant experimental parameters and assuming the case of bounded storage. It is worth stressing again that $n$ denotes the \textbf{block length used in the commitment}, while $M$ denotes the \textbf{number of signals sent from Alice to Bob}. These quantities are related by the expressions given in \eqref{param}.

\begin{lemma}\label{d2}
Let a dishonest Bob's storage size be bounded by S. For fixed parameters S and $\epsilon$, and given the experimental probabilities listed in Table I, and for some $\beta,\gamma \in (0,0.01]$, let
\begin{eqnarray}
m_2 &=& p_{\rm sent}^1-p_{\rm B, no click}^{\rm h}+p_{\rm B, no click}^{\rm d} -3\gamma\nonumber\\
m_3 &=& 1-p_{\rm B, no click}^{\rm h}\nonumber\\
L'&=&\max_{ s \in (0,1] } ~ \frac{-1}{s} \left[\log(1+2^s)-1-s\right] -\frac{3\epsilon}{s}\nonumber\\
\hat{\lambda} &=& h(\delta) + 3\beta^2\nonumber\\
\delta &=& 2\cdot\frac{p_{\rm err}+\frac{\beta}{\sqrt{1-2\beta}}}{1-4\sqrt{5}\beta}\nonumber\\
M_1 &=& \frac{1}{2\gamma^2}\log\frac{2}{\epsilon}\nonumber\\
M_2 &=& \frac{\log\frac{1}{\epsilon}}{\epsilon\cdot m_2}\nonumber\\
M_3 &=& \frac{\log\frac{2}{\epsilon}}{(m_3-\gamma)\beta^2}\nonumber\\
M_4 &=& \frac{S}{m_2 L'-m_3\hat{\lambda}}
\end{eqnarray}
For security to hold at all, the following is required:	
\begin{equation}\label{condition}
m_2 L' -m_3 \hat{\lambda} > 0.
\end{equation}
If \eqref{condition} is true, then bit commitment can be implemented $3\epsilon$-securely by using a randomly constructed error-correcting code, whenever
\begin{equation}\label{boundingm}
M > \max ~ \{M_1,M_2,M_3,M_4\}.
\end{equation}

\begin{proof} By the analysis of \cite{Curty10}, the min-entropy rate has the form
\begin{equation}
\lambda = \frac{m_{\rm left}^1\cdot L - \frac{S}{M}}{m_{\rm frac}},
\end{equation}
where 
\begin{equation}
L = \max_{ s \in (0,1] } ~ \frac{-1}{s} \left[\log(1+2^s)-1-s\right] -\frac{3\log\frac{1}{\epsilon}}{m_{\rm left}^1\cdot M}.
\end{equation}
Note that $\zeta$ is dependent on $M$, and its value decreases while $M$ increases. Setting $\zeta \leq \gamma$ which is a constant, provides an lower bound for $M$, which gives the value for $M_1$ depending on the chosen $\gamma$.\\
Similarly, setting $\frac{\log\frac{2}{\epsilon}}{m_{\rm left}^1\cdot M}\leq\epsilon$ provides $M_2$ and $L\geq L'$. \\
$M_3$ comes from the condition given for $n$ at \eqref{boundm3}, while $M=\frac{n}{m_{\rm frac}} \geq \frac{n}{m_3-\gamma}$.
Lastly, 
\begin{equation}
\lambda\geq \frac{m_2\cdot L' - \frac{S}{M}}{m_3} \geq \hat{\lambda}
\end{equation}
provides the main condition for security to hold at all (S=0), while rearranging gives the value for $M_4$.
\end{proof}
\end{lemma}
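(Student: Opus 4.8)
The plan is to reduce the whole statement to the single security criterion $\lambda \geq \hat\lambda$ already established in Theorem~\ref{security}, and then to exhibit each of the four thresholds $M_1,\dots,M_4$ as precisely what is needed to discharge one of the four ways in which the achievable smooth min-entropy rate $\lambda$ still depends on $M$. First I would invoke Theorem~\ref{security}: once the relative distance is fixed at $\delta = 2\bigl(p_{\rm err}+\beta/\sqrt{1-2\beta}\bigr)/(1-4\sqrt5\beta)$ and the block length obeys~\eqref{boundm3}, the protocol is $3\epsilon$-secure with a random code provided $\lambda \geq \hat\lambda = h(\delta)+3\beta^2$. Thus the entire content of the lemma is to certify that $M > \max\{M_1,M_2,M_3,M_4\}$ forces both $\lambda \geq \hat\lambda$ and the block-length hypothesis of Theorem~\ref{security}, while keeping $\beta,\gamma\in(0,0.01]$ and $\delta\in[0.05,0.11]$ in the admissible ranges.

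Next I would import, as a black box from the WSEE analysis of~\cite{Curty10}, the explicit form of the attainable rate,
\begin{equation}
\lambda = \frac{m_{\rm left}^1\, L - S/M}{m_{\rm frac}},\qquad L = \max_{s\in(0,1]}\frac{-1}{s}\bigl[\log(1+2^s)-1-s\bigr]-\frac{3\log\frac1\epsilon}{m_{\rm left}^1\, M},
\end{equation}
in which $m_{\rm left}^1$, $m_{\rm frac}$ and the fluctuation $\zeta=\sqrt{\ln(2/\epsilon)/(2M)}$ all tighten as $M$ grows. The strategy is to replace each $M$-dependent piece by a clean constant, valid once $M$ is large enough, arranging every replacement to be a genuine \emph{lower} bound on $\lambda$. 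Concretely: $\zeta \leq \gamma$ holds exactly when $M\geq M_1$, after which $m_{\rm left}^1\geq m_2$ and $m_{\rm frac}=m_3-\zeta$ is controlled; demanding that the finite-size correction in $L$ be of order $\epsilon$, i.e. $\log(2/\epsilon)/(m_{\rm left}^1 M)\leq\epsilon$, holds for $M\geq M_2$ and yields $L\geq L'$; since $n=m_{\rm frac}M$, the block-length bound~\eqref{boundm3} translates into $M\geq M_3$; and the storage-subtraction term $S/M$ is finally absorbed by $M_4$.

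Assembling these, for $M$ above all four thresholds one obtains the clean lower bound $\lambda \geq (m_2 L' - S/M)/m_3$, and the requirement $\lambda\geq\hat\lambda$ rearranges to $S/M \leq m_2 L' - m_3\hat\lambda$, i.e. $M\geq M_4$. This is solvable for finite $M$ exactly when the right-hand side is positive, which is the feasibility condition~\eqref{condition} and corresponds to the $S=0$ limit already being secure. I expect the main obstacle to be the bookkeeping of inequality directions so that every substitution stays a valid lower bound on $\lambda$: the $-S/M$ term is subtracted, $m_{\rm frac}$ sits in the denominator, and $L$ is itself a maximum over $s$ carrying its own finite-$M$ correction, so one must verify that the numerator $m_2 L'-S/M$ remains positive (guaranteed by~\eqref{condition} for large $M$) before dividing by $m_3 \geq m_{\rm frac}$, and confirm that the chosen $\delta$ keeps $\hat\lambda$ within the hypotheses of Theorem~\ref{security}.
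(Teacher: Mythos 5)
Your proposal is correct and follows essentially the same route as the paper's proof: you import the same rate formula $\lambda = (m_{\rm left}^1 L - S/M)/m_{\rm frac}$ from~\cite{Curty10}, derive $M_1,M_2,M_3$ from the same three substitutions ($\zeta\leq\gamma$, the finite-size correction in $L$ bounded by $\epsilon$, and the translation of~\eqref{boundm3} via $n=m_{\rm frac}M$), and obtain $M_4$ and the feasibility condition~\eqref{condition} by the same rearrangement of $\lambda\geq\hat\lambda$. Your added attention to the inequality directions (positivity of the numerator before dividing by $m_3\geq m_{\rm frac}$) is a minor tightening of bookkeeping the paper leaves implicit, not a different argument.
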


\subsection*{Example: noisy storage}
Lemma \ref{d2} gives the case for bounded storage model, with no quantum noise assumed for cheating Bob's storage device. For a more general noisy storage assumption, where the quantum channel satisfies a strong converse relation as in \cite{noisy:new}, the $\frac{\epsilon}{2}$-smooth min-entropy can be evaluated by
\begin{align}\label{depolarizingstorage}
& H_{\mathrm{min}}^{\epsilon'+\epsilon''}(X^n|\Theta^n K \cF(\cQ)) \nonumber\\
&\geq -\log P_{\rm succ}^{\cN^{\otimes S}} \left[H_{\mathrm{min}}^{\epsilon'}(X^n|\Theta^n K)-\log\frac{1}{\epsilon''}\right]\nonumber\\
&= S\cdot \gamma^{\cN} \left(\frac{H_{\mathrm{min}}^{\epsilon'}(X^n|\Theta^n K)-\log\frac{1}{\epsilon''}}{S}\right),
\end{align}
where $\gamma^{\cN}$ is the strong converse parameter of the quantum channel. For a fixed error parameter $\epsilon$, $\epsilon'$ and $\epsilon''$ should be chosen such that $\epsilon'+\epsilon'' = \frac{\epsilon}{2}$. Compared to the analysis in \cite{Curty10}, where the storage size is determined by introducing a \textit{storage rate} quantity $\nu$, in this analysis we work with the quantity $S$, which is the maximum number of qubits Bob is able to store. To use this quantity in the analysis, we invoke the bounded storage assumption that Bob cannot store more than $S$ qubits, then calculate the conditions for security. For example, in the setting of depolarizing noise for a two-dimensional quantum channel, the strong converse parameter is given by the following expression \cite{noisy:new}:
\begin{eqnarray}
\gamma^{\cN} (\hat{R}) &=& \max_{\alpha\geq 1} \frac{\alpha-1}{\alpha} (\hat{R}- C) \nonumber\\
C &=& 1 - \frac{1}{1-\alpha}\log\left[ p^\alpha+(1-p)^\alpha\right] \nonumber\\
p &=& \frac{1+r}{2}.
\end{eqnarray}

\subsection*{Our experiment}\label{ourexp}
We state again the bounds of experimental parameters as derived:
\begin{eqnarray}
p_{\rm sent}^1&=&1-p_{\rm sent}^0-p_{\rm sent}^{n>1}>0.125\nonumber\\
p_{\rm sent}^0+p_{\rm sent}^1&=&1-p_{\rm sent}^{n>1}>0.99947\nonumber\\
p_{\rm B, no click}^{\rm h}&=&0.909,\nonumber\\
p_{\rm err} &=& 0.0412.
\end{eqnarray}
Note that $p_{\rm B, no click}^{\rm h}$ and $p_{\rm err}$ are values obtained after the symmetrization procedures on both Alice's and Bob's side.

Bounds on $n$ (and $M$) derived based on Theorem \ref{security} and Lemma \ref{d2} are non-tight. Here, we use an optimal block length such that the classical information post-processing is minimal. We summarize the calculations in the following steps:

\begin{list}{\arabic{qcounter}:~}{\usecounter{qcounter}}
\item {\bf Fix $\epsilon$ and $n$:} Firstly, we set $\epsilon=0.99\times 10^{-5}$, and $n=2.5\times 10^5$. By doing so, all relevant parameters in \eqref{param} can be evaluated, except for $\alpha_3$ which will depend on the error-correcting code. $M$ is known by its relation with $n$ as stated in \eqref{param}, where this is justified by a detailed explanation offered right after Protocol 1.

\item {\bf Evaluate relative minimum distance $\delta$:} By performing a numerical optimization that satisfies the condition on relative distance, as stated in Theorem \ref{summary}, we obtain $\delta > 0.998201$. 

\item {\bf Set $\epsilon_{code}$:} To obtain a code that with $\delta$ that satisfies the condition as evaluated in step 2, we use Theorem \ref{randomcode}. First, we need to set an $\epsilon_{code}=2\times 10^{-7}$, which bounds the probabilistic error for generating a bad random code. By doing so, we pose an upper bound upon the code rate $R$. Using $R=0.531$ satisfies this condition. By using Theorem \ref{summary}, the protocol is secure provided that the $\frac{\epsilon}{2}$-smooth min entropy rate $\lambda > 0.469133$.

\item {\bf Evaluate storage assumption:} By using $p_{\rm B, no click}^{\rm h}$, $p_{\rm B, no click}^{\rm h}$, $p_{\rm sent}^1$, $M$ and $\epsilon$, evaluate and optimize $\lambda$ for different storage noise and storage sizes $S$, such that $\lambda$ satisfies the condition in Step 3.

\item {\bf Evaluate total execution error:} The total execution error is then evaluated as $\epsilon_{total}=2\epsilon+\epsilon_{code} = 2\times 10^{-5}$.
\end{list}

For the bounded storage assumption, the commitment is secure whenever dishonest Bob's storage size is bounded by $S_{bounded}=928$ qubits.
For the noisy storage assumption, we can use \eqref{depolarizingstorage} and maximize over all choices of $\epsilon'$ and $\epsilon''$. For a depolarizing noise of noise parameter $r=0.9$, the commitment is secure whenever Bob's storage size is bounded by $S_{noisy} = 972$ qubits.

\end{document}